\newcommand{\email}[1]{\href{mailto:#1}{\texttt{#1}}}
\newcommand{\R}{\mathbb{R}}
\newcommand{\calD}{\mathcal{D}}
\newcommand{\calN}{\mathcal{N}}
\newtheorem{theorem}{Theorem}[section]
\newtheorem*{theorem*}{Theorem}
\newtheorem{proposition}[theorem]{Proposition}
\newtheorem*{proposition*}{Proposition}
\newtheorem{lemma}[theorem]{Lemma}
\newtheorem*{lemma*}{Lemma}
\newtheorem{corollary}[theorem]{Corollary}
\newtheorem*{conjecture*}{Conjecture}
\newtheorem{fact}[theorem]{Fact}
\newtheorem*{fact*}{Fact}
\newtheorem*{hypothesis*}{Hypothesis}
\newtheorem{conjecture}[theorem]{Conjecture}
\newtheorem{itheorem}[theorem]{Informal Theorem}
\newtheorem{claim}{Claim}[section] 
\newtheorem*{claim*}{Claim}
\theoremstyle{definition}
\newtheorem{definition}[theorem]{Definition}
\newtheorem{question}[theorem]{Question}
\newtheorem{problem}[theorem]{Problem}
\newtheorem*{question*}{Question}
\theoremstyle{remark}
\newtheorem*{remark*}{Remark}
\newcommand{\eat}[1]{}
\newcommand{\vol}{\mathrm{vol}}
\newcommand{\norm}[1]{\lVert #1 \rVert}
\newcommand{\iprod}[1]{\langle#1\rangle}
\newcommand{\Esymb}{\mathbb{E}}
\newcommand{\Psymb}{\mathbb{P}}
\DeclareMathOperator*{\E}{\Esymb}
 \DeclareMathOperator*{\Var}{\mathrm{Var}}
 \DeclareMathOperator*{\ProbOp}{\Psymb}
 \DeclareMathOperator*{\argmin}{argmin}
\renewcommand{\Pr}{\ProbOp}
\newcommand{\diag}{\text{diag}}
 \newcommand{\eps}{\varepsilon}
\renewcommand{\epsilon}{\varepsilon}
\newcommand{\poly}{\mathrm{poly}}
\newcommand{\epst}{\eps_{\star}}
\newcommand{\gammast}{\gamma_{\star}}
\DeclareMathOperator{\dist}{dist}
\newcommand{\wta}{\widetilde{a}}
\newcommand{\wtw}{\widetilde{w}}
\newif\ifnotes\notesfalse
\definecolor{mygrey}{gray}{0.50}
\newcommand{\notename}[2]{{\textcolor{blue}{\footnotesize{\bf (#1:} {#2}{\bf ) }}}}
\newcommand{\vnote}[1]{{\notename{Vaidehi}{#1}}}
\newcommand{\anote}[1]{{\notename{Aravindan}{#1}}}
\newcommand{\notename}[2]{{}}
\newcommand{\enote}[1]{}
\newcommand{\vnote}[1]{}
\newcommand{\bnote}[1]{}
\newcommand{\anote}[1]{}
\title{Learning Confidence Ellipsoids and Applications to Robust Subspace Recovery}
\author{}
\author{    
    Chao Gao\thanks{Department of Statistics, University of Chicago, Chicago, USA, \email{chaogao@uchicago.edu}} , 
    Liren Shan\thanks{Toyota Technological Institute at Chicago, Chicago, USA, \email{lirenshan@ttic.edu}} ,  Vaidehi Srinivas\thanks{Department of Computer Science, Northwestern University, Evanston, USA, \email{vaidehi@u.northwestern.edu}} ,  Aravindan Vijayaraghavan\thanks{Department of Computer Science, Northwestern University, Evanston, USA, \email{aravindv@northwestern.edu}}
    }
\date{}
\begin{document}
\maketitle
\begin{abstract}
We study the problem of finding confidence ellipsoids for an arbitrary distribution in high dimensions. Given samples from a distribution $\mathcal{D}$ and a confidence parameter $\alpha$, the goal is to find the smallest volume ellipsoid $E$ which has probability mass $\Pr_{\mathcal{D}}[E] \ge 1-\alpha$. Ellipsoids are a highly expressive class of confidence sets as they can capture correlations in the distribution, and can approximate any convex set. This problem has been studied in many different communities.  In statistics, this is the classic {\em minimum volume estimator} introduced by Rousseeuw as a robust non-parametric estimator of location and scatter. However in high dimensions, it becomes NP-hard to obtain any non-trivial approximation factor in volume when the condition number $\beta$ of the ellipsoid (ratio of the largest to the smallest axis length) goes to $\infty$. This motivates the focus of our paper: 
%
\noindent {\em can we efficiently find confidence ellipsoids with volume approximation guarantees when compared to ellipsoids of bounded condition number $\beta$?}

Our main result is a polynomial time algorithm that finds an ellipsoid $E$ whose volume is within a $O(\beta)^{\gamma d}$ multiplicative factor 
of the volume of best $\beta$-conditioned ellipsoid while covering at least $1-O(\alpha/\gamma)$ probability mass for any $\gamma \in (0,1)$.  In particular, setting $\gamma = o(1)$, this gives a $O(\beta)^{o(d)}$ volume approximation, with a multiplicative loss in miscoverage. 
We complement this with a computational hardness result that shows that such a dependence on $\beta$ in the volume approximation factor seems necessary, even with some slack in coverage.   
The algorithm and analysis uses the rich primal-dual structure of the minimum volume enclosing ellipsoid and the geometric Brascamp-Lieb inequality. This natural iterative algorithm also gives a new subroutine for robust estimation problems.  
As a consequence, we obtain the first polynomial time algorithm with approximation guarantees on worst-case instances of the robust subspace recovery problem.   
\end{abstract}

\thispagestyle{empty} 

\newpage 
\thispagestyle{empty} 
\tableofcontents

\newpage
\setcounter{page}{1}
\section{Introduction}

Finding small confidence sets is a basic and ubiquitous task in statistics and data analysis.  
Given independent samples drawn from the distribution $\mathcal{D}$ over $\R^d$ and a confidence parameter $\alpha \in (0,1)$, the goal is to find the set $S$ in a natural class $\mathcal{C}$ of confidence sets that minimizes volume while achieving the desired coverage of $1-\alpha$:
\begin{equation} \min \vol(S) \quad \text{s.t. } S \in \mathcal{C}, \quad \mathbb{P}_{\calD}[S] \ge 1-\alpha. \label{eq:popln}
\end{equation}
Here $\vol(S)$ denotes the volume or Lebesgue measure of the set $S$. This is a central and well-studied problem in statistics, with applications to estimating density level sets, support estimation, uncertainty quantification, robust estimation and conformal prediction~\cite{einmahl1992generalized, polonik1997minimum, rousseeuw1985multivariate, polonik1999concentration, garcia2003level, scott2005learning, gao2025volume}.  
When the VC dimension of the class $\mathcal{C}$ is bounded, this task becomes statistically tractable by solving the empirical version of the problem on a finite number of samples~\cite{polonik1999concentration, scott2005learning}. This problem has been studied for several natural classes including balls, ellipsoids, rectangles or products of intervals, and even some more general non-convex sets, but it becomes algorithmically challenging in high dimensions. 

Among natural classes of confidence sets, ellipsoids occupy a privileged position. 
Confidence ellipsoids are a highly expressive family of confidence sets, as they naturally adapt to correlations among high-dimensional outputs and highlight directions of high or low uncertainty. They have been successfully applied in practical settings with structured outputs --- for instance, in time-series forecasting, where prediction errors across timesteps are highly correlated 
~\cite{conformalellipsoid2022, conformalellipsoid2024xu}.  
Moreover, confidence ellipsoids are universal approximators of {\em all convex confidence sets} due to the famous John's theorem~\cite{john1948extremum}.  

Recall that an ellipsoid $E$ is parameterized by a center $c\in \R^d$ and a positive definite matrix $M$ and given by $E=\{x: \norm{M^{-1/2}(x-c)}_2 \le 1\}$. 
The family of ellipsoids in $\R^d$ has VC dimension $\binom{d+1}{2}+d$. Given any arbitrary distribution $\calD$, we can draw $n = O(d^2)$ i.i.d.\ samples from $\calD$, and solve the following empirical version of the problem.  For $n$ \emph{arbitrary} points $a_1, a_2, \dots, a_n \in \R^d$, find
\begin{align}\label{eq:intro:empirical}
\argmin_{E \in \mathcal{E}} \vol(E) \text{ s.t. } \Big|E \cap \{a_1, \dots, a_n \} \Big| \ge (1-\alpha)n, 
\end{align}

\noindent where $\mathcal{E}$ is the family of ellipsoids in $d$ dimensions. Any ellipsoid $E$ that contains at least $(1-\alpha)n$ of the points also achieves a coverage of $1-\alpha - O(\sqrt{\frac{d^2}{n}})$ on $\calD$, via uniform convergence for set families of bounded VC dimension.  While there is a distributional flavor to the population version of the problem in \eqref{eq:popln} (with $\mathcal{C}=\mathcal{E}$), the distributional and empirical problems are actually equivalent.  Given any arbitrary set of $n$ points $a_1, \dots, a_n \in \R^d$, one can just consider $\calD$ to be the uniform probability distribution on these $n$ points. This emphasizes the worst-case nature of the problem, as the distribution $\calD$ is arbitrary. This is also crucial in many applications where the method is used to estimate uncertainty of an (unknown) arbitrary distribution.  

Problem \eqref{eq:intro:empirical} has been an important object of study in several different communities. In statistics, this is the famous {\em minimum volume estimator} (MVE) of Rousseeuw~\cite{rousseeuw1985multivariate}, that is studied as a robust non-parametric estimator for location and scatter parameters of a distribution. It is preferred for several desirable properties including affine equivariance (they behave well under affine transformations of the data), and having the {\em maximum breakdown value}\footnote{The breakdown value is the fraction of outliers the estimator can tolerate while providing a meaningful estimate.} among all robust estimators of scatter and location (see the excellent survey by \cite{van2009minimum} for details). 
Solving this problem exactly is known to be NP-hard~\cite{thorsten, AhmadiML2014}. This is also been studied as the {\em robust minimum volume ellipsoid}, {\em minimum volume ellipsoid with outliers}, and {\em minimum $k$-enclosing ellipsoid problem} in optimization,  applied mathematics and computational statistics, where many heuristics have been developed.
These methods have been used as subroutines for applications in control theory, dynamical systems, and uncertainty quantification~\cite{Agullo1996, AhmadiML2014, Ahipasaoglu2014, CalafioreG04, lindemann2024,  henderson2025adaptiveinferencerandomellipsoids}.  
%

However, despite several decades of work, to the best of our knowledge, there is no known polynomial time algorithm for learning small confidence ellipsoids. 

\vspace{5pt}
\noindent {\bf Question: }{\em Can we design polynomial time algorithms that learn confidence ellipsoids with rigorous volume guarantees?}

\vspace{5pt}

 When $\alpha=0$, this is the classic minimum volume enclosing ellipsoid problem (MVEE) which can be solved in polynomial time using semi-definite programming~\cite{boyd2004convex, Toddbook}. This is also the famous L\"owner-John ellipsoid for the convex hull of the given points~\cite{john1948extremum}. For general $\alpha \in (0,1)$, very recent work of \cite{gao2025confidence} studied the special case of \eqref{eq:intro:empirical} for Euclidean balls, which is already known to be NP-hard. They give polynomial time algorithms that achieve volume guarantees that are competitive with that of the optimal solution up to a multiplicative factor, while relaxing the confidence parameter $\alpha$ by a small amount. 
 
  The algorithmic problem~\eqref{eq:intro:empirical} 
  is significantly more challenging 
  compared to balls. 
  A special case is {\em robust subspace recovery} which is the computationally hard problem of finding a subspace of dimension at most a prescribed $(1-\gamma)d$ that contains an $1-\alpha$ fraction of points~\cite{HardtMoitra13}. The NP-hardness of the robust subspace recovery problem directly implies that  
it is NP-hard to get {\em any non-trivial volume approximation} for \eqref{eq:intro:empirical} i.e., distinguish when the minimum volume ellipsoid containing a $1-\alpha$ fraction of points has $\vol=0$ versus $\vol>0$ (see Appendix~\ref{app:nphardness}). 
However, this hard instance is pathological, where the optimal ellipsoid is ill-conditioned, with one of the axes lengths being $0$ (or close to $0$). This motivates restricting our attention to ellipsoids with bounded {\em condition number}. 

Given an ellipsoid $E = \{x: \norm{M^{-1/2}(x-c)}_2 \le 1\}$, the condition number $\beta(E) \in [1,\infty)$ is the ratio of maximum to minimum axes lengths:
\begin{equation}\label{eq:conditionnumber}
\beta(E) \coloneq \frac{\max_{i \in [d]}\lambda_i(M^{1/2})}{\min_{i \in [d]} \lambda_i(M^{1/2})} = \sqrt{\frac{\max_{i \in [d]}\lambda_i(M)}{\min_{i \in [d]} \lambda_i(M)}},
\end{equation}
\noindent where $\lambda_1(M), \dots, \lambda_d(M)$ are the eigenvalues of $M$. The condition number $\beta(E)=1$ for a Euclidean ball, and $\beta \to \infty$ as the ellipsoid becomes supported on a proper subspace of $\R^d$. 
It will be instructive to think of $\beta$ as being a large constant or polynomial in the dimension $d$. The main question that we study in this paper is the following.

\begin{tcolorbox}
\begin{question*}[Minimum $\beta$-Conditioned Confidence Ellipsoids]\label{qn:ellipsoid}
Suppose there exists a $\beta$-conditioned ellipsoid $E^\star$ that contains $(1-\alpha)$ fraction of the input points $a_1, \dots, a_n \in \R^d$.  Can we, in polynomial time, find an ellipsoid $E$ containing an $1-\alpha'$ fraction of the points whose volume is $\Gamma$-competitive with $E^\star$ i.e.,
\begin{equation}
\vol(E)^{1/d} \le \Gamma \cdot \vol(E^\star)^{1/d},
\end{equation}
\end{question*}
where the missed coverage $\alpha'=f(\alpha)$ goes to $0$ as $\alpha \to 0$?
\end{tcolorbox}

Here the approximation factor $\Gamma(\beta,d, \alpha)$ is a function of $\beta, d, \alpha$ and is measured with respect to $\vol^{1/d}$ following the convention in \cite{gao2025confidence}, and the corresponding approximation factor for volume becomes $\Gamma^d$. 



\subsection{Our Results} \label{sec:results}


Our main result is a polynomial time algorithm that achieves a rigorous volume guarantee for ellipsoids whose condition number $\beta$ is bounded. 

\begin{theorem}[Algorithm for $\beta$-conditioned ellipsoids]\label{thm:main}
For any parameters $\alpha \in (0,1)$, $\gamma \in (0,1)$, $\beta \ge 1$, 
there exists a universal constant $c>0$ such that the following holds. There is an algorithm that given a set of $n$ points $A=\{a_1, \dots, a_n\} \in \R^d$ for which there exists an ellipsoid $E^\star \subset \R^d$ that is $\beta$-conditioned and $|E^\star \cap A| \ge (1-\alpha)n$ , finds in polynomial time an ellipsoid $\widehat{E} \subset \R^d$  with $\Gamma=\beta^\gamma$ i.e.,
\begin{equation}\label{eq:main:volume}
\vol^{1/d}(\widehat{E}) \le (4\beta)^{2\gamma \left(1 + \frac{1}{d} \right)} \cdot \vol^{1/d}(E^\star), \text{ and encloses at least } |\widehat{E} \cap A| \ge \Big(1-\frac{c\alpha}{\gamma}\Big)n  \text{ points}.
\end{equation}
\end{theorem}
\vnote{do we want to use thm-restate with the formal version here?}\anote{Sure, please feel free to take executive decisions.}
(See \Cref{thm:approx-optimal-confidence-ellipsoid} in \Cref{sec:algorithm} for more details.)
This immediately implies a corresponding guarantee for the distributional version, using the VC dimension bound on ellipsoids. 

\begin{corollary}[Distributional version of \Cref{thm:main}]\label{cor:main}
For any parameters $\alpha \in (0,1)$, and $\gamma \in (0,1), \beta \ge 1$, there is an algorithm that given 
$n$ i.i.d. samples from an arbitrary distribution $\calD$ over $\R^n$ with an (unknown) $\beta$-conditioned ellipsoid $E^\star \subset \R^d$ satisfying $\Pr_\calD[E^\star] \ge (1-\alpha)$ , finds in polynomial time an ellipsoid $\widehat{E} \subset \R^d$  with
\begin{equation}\label{eq:main:volume_distribution}
\vol^{1/d}(\widehat{E}) \le (4\beta)^{2\gamma \left(1 + \frac{1}{d} \right)} \cdot \vol^{1/d}(E^\star), \text{ and } \Pr_\calD[\widehat{E}] \ge 1-O\Big(\frac{\alpha}{\gamma} + \sqrt{\frac{d^2}{n}} \Big).
\end{equation}
\end{corollary}

\paragraph{Implication: $\Gamma = \beta^{o(1)}$ approximation.} Consider the setting when $\alpha = o(1)$ (or a very small constant). We can compare against an algorithm that uses as a black-box the guarantee given by \cite{gao2025confidence} for Euclidean balls. Since the ellipsoid $E^\star$ is $\beta$-conditioned, there is also a Euclidean ball whose volume is within a $\beta^d$ multiplicative factor of it. The result of \cite{gao2025confidence} computes an ellipsoid whose volume is within a $2^{o(d)}$ factor to the volume of the best ball (while also incurring a small loss in coverage). This leads to a volume approximation factor of $2^{o(d)}\beta^d$. In other words, this achieves an approximation factor $\Gamma \approx \beta$. 

In comparison, our algorithm achieves a guarantee of $\Gamma=\beta^\gamma$ i.e., a volume approximation of $\beta^{\gamma d}$, with miscoverage $\alpha/\gamma$. This achieves a $\Gamma=\beta^{o(1)}$ factor for any choice of $\gamma = o(1)$. For example, by setting $\gamma=\sqrt{\alpha}$, the ellipsoid encloses at least $(1-O(\sqrt{\alpha}))$ fraction of the points, and achieves 
Alternatively, by setting $\gamma = \omega( 1/\log \beta)$, we can achieve an approximation factor $\Gamma = \exp(o(d))$, which is independent of $\beta$.

\vnote{03/31 added w.r.t. reviewer feedback:} As we will see later, the $\Gamma = \beta^{o(1)}$ guarantee is key to getting an algorithm for robust subspace recovery.  This gives an example of how this guarantee is qualitatively different and more challenging than the weaker $\Gamma = \beta$ guarantee.  

Moreover, our algorithm deviates significantly from the approach of \cite{gao2025confidence} which leveraged a new connection 
to robust high-dimensional estimation. In contrast, we design and analyze a natural iterative algorithm based on the rich primal-dual structure of the {\em minimum volume enclosing ellipsoid} problem, and the {\em D-optimal design} problem,\footnote{The dual to the minimum volume enclosing ellipsoid SDP is the well-studied {\em D-optimal design} problem or equivalently the {\em G-optimal design} problem in statistics and experimental design, that tries to weight the covariates of a linear regression problem to minimize the variances of the predictions~\cite{Toddbook}. See also \cite{nikolov2015max, singh2016maxdet} for the ``integral'' version of the problem and other interesting variants. } and serves as a potential subroutine for other robust estimation tasks. 



 Our next result shows that the dependence on $\beta$ in the factor $\Gamma$ (and the volume approximation) is unavoidable (up to a constant factor in the exponent), even when we allow for some slack in the missed coverage $\alpha$. This is assuming the well-studied {\em small set expansion (SSE)} conjecture of Raghavendra and Steurer~\cite{raghavendra2010graph}, which is closely related to the unique games conjecture~\cite{khot2003UGC}. 
 In fact the hardness holds for $\Gamma = \beta^{\Omega(1)}$; hence achieving a factor $\Gamma = \beta^{o(1)}$ requires some loss in coverage. 

\anote{Liren, please verify this statement. }
\begin{itheorem}[SSE-hardness]\label{ithm:SSE-hard}
 Assuming the Small-Set Expansion conjecture, there exist small constants $\alpha, \delta>0$ and $\beta_0 \ge 1$ such that for {any} (sufficiently) large $\beta \ge \beta_0$, no
 polynomial-time algorithm when given as input $n$ points with a $\beta$-conditioned ellipsoid containing a $1-\alpha$ fraction of them can achieve a factor $\Gamma < \beta^{\delta-o(1)}$ with coverage $1-(1+\delta)\alpha$.
\end{itheorem}

See \Cref{thm:SSE-hard} in \Cref{sec:SSE-hard} for a formal statement. This hardness result is based on the hardness for robust subspace recovery due to Hardt and Moitra~\cite{HardtMoitra13}, and already suggests a strong connection between the two problems. 




\subsubsection{Robust Subspace Recovery and Other Applications}

Our algorithm from \Cref{thm:main} gives a new algorithmic tool for robust estimation problems in high-dimensions. We show that we can use our algorithm to obtain polynomial time algorithmic guarantees for the challenging {\em robust subspace recovery} problem.  
Recall that we are given as input $n$ points $a_1, \dots, a_n \in \R^d$ (which we can assume are unit vectors), an outlier fraction $\alpha \in (0,1)$, and a closeness parameter $\eps \in (0,1)$.  The goal is to find a subspace $S \subset \R^d$ (passing through the origin) of smallest possible dimension (or largest co-dimension) that is $\eps$-close in $\ell_2$ distance to at least $(1-\alpha)n$ points. 
This is a well-studied problem in statistics, with motivation as a variant of PCA that finds a low-dimensional subspace even in the presence of outliers~\cite{subspacerecoverySurvey, HardtMoitra13}. 
Due to the NP-hardness of the problem~\cite{khachiyan}, previous works that provide polynomial time guarantees need to make  structural assumptions on either the outliers, or the inliers (points on the subspace). 

\paragraph{Prior algorithmic results on robust subspace recovery.} The work of Hardt and Moitra \cite{HardtMoitra13} assumes genericity (similar to a smoothed analysis assumption) on the outliers and gives elegant polynomial time algorithms when the dimension of the subspace $k$ containing $(1-\alpha)n$ points satisfies $k \le d(1-\alpha)$. They complement their result with a hardness result in the {\em worst-case setting} when $k \ge (1-\alpha) d$ assuming the SSE conjecture, even in the noiseless setting where $\eps=0$ \cite{HardtMoitra13}. 
The work of \cite{BCPV} improved the condition on $\alpha$ in the smoothed analysis setting, and also provided some noise tolerance $\eps \approx 1/\poly(n)$. More recent works based on sum-of-squares (SoS) relaxations assume distributional assumptions (e.g., Gaussianity or rotational invariance) on the points in the subspace~\cite{bakshi2021list} with stronger noise tolerance.
In contrast we study the worst-case version of the problem, and make no assumptions on either the inliers or outliers. We are unaware of any non-trivial algorithmic guarantee in the worst-case setting of the problem prior to our work. We show that the algorithm for learning small volume confidence ellipsoids can be used as a subroutine to provide algorithmic guarantees for robust subspace recovery in the worst-case. 

\anote{12/18:Removed condition on $n$.}
\begin{itheorem}\label{ithm:subspacerecovery}
For any $\gamma \in (0,1)$, there is a polynomial time algorithm, that given as input points $A=\{a_1, \dots, a_n\} \in \R^d$ with a subspace $S^\star$ of dimension $(1-8\gamma) d$ that is $\epst \le \eps^{4/\gamma}$ close to at least $(1-\alpha)n$ of the given points, finds w.h.p.\ a subspace $S \subset \R^d$ of dimension at most $(1-\gamma)d$ such that at least $(1-O(\alpha/\gamma))n$ points are $\eps$ close to $S$. 
\end{itheorem}

See \Cref{thm:subspacerecovery} for the formal statement. 
When we know that there are many points that lie exactly on the subspace $S$, we have $\epst=0$. We can pick $\eps$ to be whatever desired accuracy (that can be arbitrarily close to $0$) 
and apply the above algorithm to find a subspace that is $\eps$-close to the desired number of points. On the other hand, we also get guarantees when the desired accuracy $\eps$ is a small constant (for e.g., $\gamma$ is a constant and $\epst$ is a sufficiently small constant) or $\eps$ is inverse polynomial in $d^\gamma$ (when $\epst \le 1/\poly(d)$). 

\vnote{03/31 added w.r.t to reviewer feedback:}
To use this approach, it is crucial that we use an algorithm that can approximate a confidence ellipsoid up to a factor better than \(\Gamma  = \beta\).  Our algorithm uses the fact that the points that lie close to the true underlying subspace must lie in a thin slab.  Thus, they also lie in an ellipsoid that is thin in many directions, corresponding to the co-dimension of the subspace.  Our algorithm applies a small random perturbation to all of the points, ensuring that we can control the width of this ellipsoid in all of the thin directions.  Then, we argue that a guarantee of \(\Gamma = \beta^\gamma\) (i.e. \(\beta^{\gamma d}\) approximation in volume) to the ellipsoid can only blow up \(\approx \gamma d\) of these directions.  To recover subspaces of nontrivial dimension $< d$, we use that $\gamma d \ll d$ so we do not blow up too many directions.  
This gives an example why the $\Gamma = \beta^{o(1)}$ guarantee is qualitatively different and more challenging than the trivial $\Gamma = \beta$ guarantee.  
To the best of our knowledge, this provides the first rigorous polynomial time guarantee for robust subspace recovery in the worst-case setting.



\paragraph{Statistical Applications.}  Our algorithm for learning confidence ellipsoids has some direct applications to other related problems.  One implication of our algorithm is that it gives an algorithm for learning \emph{convex confidence sets} with bounded condition number that are of approximately minimum volume (\Cref{thm:convex-via-john}).  This is because any convex set can be well-approximated by an ellipsoid, and our algorithm can provide a good approximation to the ellipsoid.  An interesting aspect of this guarantee is that the family of convex sets has infinite VC dimension.  We know that bounded VC dimension is a sufficient condition for a family of confidence sets to be statistically efficiently learnable.  However, our result suggests that bounded VC dimension is not a necessary condition for learning an approximately volume optimal confidence set in the family. 

Another implication of our result is for high-dimensional conformal prediction (\Cref{thm:conformal-ellipsoid}).  In conformal prediction, the goal is to use training examples to construct a prediction set that contains an unseen test example with probability at least \(1 - \alpha\).  An algorithm for finding small confidence sets on arbitrary samples directly implies an efficient conformal prediction algorithm. Please see \Cref{sec:applications} for details.


\subsection{Technical Overview} \label{sec:overview}

\subsubsection{Approximate Minimum Volume Confidence Ellipsoid}
Given a set of points \(A = \{a_1, \dots, a_n\} \subseteq \mathbb{R}^d\), 
our goal is to find a small ellipsoid that contains \((1- \alpha)\)-fraction of the points in \(A\).  Let \(E^\star\) be the optimal ellipsoid that encloses \((1 - \alpha)n\) points of \(A\), and has condition number at most \(\beta\).  We will refer to the points enclosed by \(E^\star\) as the \emph{inliers}, and the \(\alpha n\) points of \(A\) that are not enclosed by \(E^\star\) as the \emph{outliers}. 

\paragraph{Origin-centered setting and SDP formulation.}

For the purpose of the technical overview, we will focus on the {\em origin-centered setting}, when the center of the ellipsoid is fixed to be the origin. This special case is already challenging and well-motivated, as it captures the application to the robust subspace recovery problem. Moreover it will inspire the main ideas in the algorithm and analysis --- the general setting where the center is not the origin is handled by solving an appropriate origin-centered problem in $d+1$ dimensions. 
In \Cref{sec:algorithm}, we present the full algorithm for  the general setting where the center of the ellipsoid can be arbitrary. In the rest of this overview, we will present the main algorithmic ideas, and give a proof sketch of the entire analysis. 

For the classic problem of finding the {\em minimum volume enclosing ellipsoid} (origin-centered) containing \emph{all} of the points in \(A\), then there is a very well-studied formulation of this problem as a semidefinite program (SDP): 
\anote{Changed $M$ to $Q$}
\begin{align*}
    \max_Q ~& \log \det Q \\
    \text{s.t. }& \forall i, ~a_i^\top Q a_i \le 1 & \text{(MVEE-OC)}\\
    & Q \succeq 0.
\end{align*}
Note that the volume of the resulting ellipsoid \(E = \{x: x^\top Q x \le 1\}\) is \(\kappa_d \det(Q^{-1/2})\), where $\kappa_d>0$ is the volume of the unit radius ball in $d$-dimensions. For the purposes of this paper, we can ignore this fixed constant $\kappa_d$, as all the $d$-dimensional volumes will be scaled by the same amount. This SDP can be solved in polynomial time up to desired accuracy $\eta$ in polynomial time in $d, n, \log(1/\eta)$. 
The goal is now to choose \(\alpha n\) points of \(A\) to leave out, such that it brings down the objective value of the above SDP (MVEE-OC) as much as possible. The above SDP seems challenging to work with directly when we only need to enclose $1-\alpha$ of the distribution. 

Which points contribute to increasing the volume of the enclosing ellipsoid the most?  Intuitively, this is a problem that has to be addressed somehow by the SDP itself, as these points are the ones that the solution has to work hardest to cover.  
This motivates us to look at the dual formulation, which has a variable that captures how hard it is to satisfy the constraint from each point.  

\paragraph{Dual SDP and D-Optimal Design.} The dual of the minimum volume enclosing ellipsoid problem is the classic {\em D-optimal design} problem,   
which is (a scaled version of) the problem of finding a distribution supported on the points \(a_i\), that maximizes the determinant of the covariance or second moment matrix (centered at the origin). 
\begin{align*}
    \max_{w_i, \forall i \in [n]} ~& \log \det \left(\sum_{i = 1}^n w_i a_i a_i^\top \right) \\
    \text{s.t. }& \forall i, ~w_i \ge 0  & \text{(dual-MVEE)}\\
    & \sum_{i = 1}^n w_i \le d.
\end{align*}
The weights \(w_i\) corresponding to the optimal solution of the dual can be shown to satisfy \(w_i \le 1\).

Intuitively, far-away outliers will contribute the most to increasing the volume of the enclosing ellipsoid. The dual SDP will likely place a significant fraction of weight on outliers that lie far outside the desired small ellipsoid.  Another interpretation of this is via complementary slackness.  The points that are assigned non-zero weight in the dual solution are exactly the points that lie on boundary of the optimal ellipsoid in the primal SDP solution.  Deleting a point with $w_i=0$ i.e., one that is contained in the interior of the ellipsoid cannot, by itself, bring down the volume of the enclosing ellipsoid.  
Hence the weights \((w_i: i \in [n])\) can be interpreted as a measure of how important the point is to the volume of the enclosing ellipsoid.  

\paragraph{Iterative algorithm based on dual and progress in outlier removal.} The above intuition suggests the following natural iterative algorithm.  
 Our algorithm proceeds by iteratively solving the dual program (dual-MVEE) on the set of points $A=(a_i : i \in [n])$.  This gives us weights $(w_i: i \in [n])$, which are essentially a distribution (the weights sum to \(d\) instead of 1).  We remove each point $a_i$ from the dataset with probability proportional to $w_i$.  
%
If significant fraction of the weight $(w_i: i \in [n])$ is supported on the outlier points, then this strategy makes progress in removing many outliers.  We think of this in terms of a tunable parameter \(\gamma > 0\).  If in some iteration, the dual solution places at least \(\gamma d\) (out of the total \(d\)) weight on outlier points, then in expectation, out of the $d$ discarded points at least \(\gamma \gg \alpha \) fraction of it will be outliers. 
Hence in every iteration, as long as the dual program (dual-MVEE) puts \(\ge \gamma d\) mass on the outliers, we will continue making progress by decreasing the fraction of outliers in the remaining points, while throwing out approximately a \(\gamma/\alpha\) factor more points than the target \(\alpha n\).  

\paragraph{Bounding the volume approximation.} Eventually, once we have thrown out many outliers, on some iteration the dual will place less than \(\gamma d\) weight on outlier points.  In this case, the dual solution $(w_i: i \in [n])$ is mostly supported on the inlier points i.e., on the boundary of the optimal solution, most of the points, by weight, are inliers.  The main challenge is to prove that the volume of the enclosing ellipsoid in this iteration is a good approximation of the volume of \(E^\star\).  

\vnote{03/31 added w.r.t. reviewer feedback:} This is particularly tricky, because the SDP that we solve at this iteration is \emph{not a relaxation} of enclosing only the inliers.  We must instead tie the cost of the SDP on this instance to the cost on a different instance that contains only the inliers.  This is also challenging because we do not make {any assumptions} on the distribution of the inliers --- recall that the inliers are simply defined to be the points inside the optimal solution $E^\star$.  This is different from the standard setting in robust estimation, where the algorithm exploits strong properties of the inliers such as isotropicity or even higher-order moment conditions, which can be algorithmically certified.  

A simple procedure allows us to assume that all of the points \(a_i\) are within a bounding ball of radius \(O(\beta)\), and that the minimum axis length of \(E^\star\) is at least \(1\).  

\anote{12/22 added:} We remark that this rescaling is done once at the start of the algorithm before performing the iterative outlier removal procedure. 
Essentially, we can throw away any points that are very far from the majority of points, since $E^\star$ has condition number \(\le \beta\), so it cannot stretch too far from the majority of inliers.

The following lemma is the crucial claim in the analysis and shows that the minimum enclosing ellipsoid in any iteration where the dual places at most $o(d)$ weight on outliers has small volume competitive with the optimal solution $E^\star$ (that encloses only the inliers). 

\begin{lemma}[Good volume approximation when total outlier weight is $o(d)$]\label{lem:overview:volume-approx}
    Let $A$ be a set of points $a_1, \dots, a_n \in \R^d$ with $\norm{a_i} \le \beta$, and $E^\star = \{x: x^\top S^{-1} x \leq 1\}$ be an ellipsoid \anote{12/22 added rescaling:} with $ I \preceq S \preceq O(\beta) \cdot I$.  
    If the optimal dual solution (dual-MVEE) $(w_i: i \in [n])$ has total weight at most \(\gamma d\) on the outliers $A \setminus E^\star$, then the volume of the ellipsoid $E$ corresponding to this dual solution is upper bounded by 
    \[
    \log\Big(\frac{1}{\kappa_d}\vol(E)\Big) = \log\det \Big(\sum_{i=1}^n w_i a_i a_i^\top  \Big) \le \log \det(S) + O(\gamma d \log\beta ) ~ \implies ~ 
    \vol(E) \le \vol(E^\star) \cdot \beta^{O(\gamma d)}.\]
    Here $\kappa_d>0$ is the volume of the unit ball of radius $1$ in $d$ dimensions. 
\end{lemma}
(See \Cref{lem:vol-approx} for the formal version of this lemma in the general non-origin centered setting.) 

To develop some intuition for this lemma, 
imagine that the weights \(w_i\) are supported on exactly \(d\) points i.e.,  \(w_i = 1\) for exactly $d$ out of the $n$ points, and \(w_i = 0\) for all others. Then, if the total weight on the outliers is at most \(\gamma d\) as in the assumption of the lemma, then there are at most $\gamma d$ outlier points 
on the boundary of the enclosing ellipsoid. Any ellipsoid that encloses all of these $d$ points in the support of $w$ also encloses all the points. 
Hence we can upper bound the volume of the minimum enclosing ellipsoid of all of the $n$ points 
is 
by constructing another ellipsoid that contains these \(d\) points.  First, we contain the $(1-\gamma)d$ inlier points by using \(E^\star\).  Then, there are at most \(\gamma d\) remaining outlier points that need to be included.  For each outlier point, we will stretch \(E^\star\) in the direction of the outlier point to include it.  Since \(E^\star\) has width at least \(1\) in every direction, and each outlier point is at distance \(O(\beta)\) away, we expand the volume of \(E^\star\) by a factor at most \(O(\beta)\) for each outlier point that we stretch to include.  Thus, in this ideal situation when the weights $w$ from the optimal dual solution are supported on exactly $d$ points, with total weight of $\gamma d$ on the outliers, the volume of the enclosing ellipsoid is at most  
\[\vol(E) \le \vol(E^\star) \cdot O(\beta^{\gamma d}).\]

\paragraph{Challenge: spread outliers.} Of course, in general, the dual weights could be supported on many more than \(d\) points.  This presents a challenge for the type of analysis sketched above, as that uses not only that there is at most \(\gamma d\) weight on outlier points, but also {\em crucially that it is concentrated in at most \(\gamma d\) directions}.  When the outlier weight could be spread out in many directions, it can affect the value of the dual solution significantly more.  


Consider an example where \(E^\star\) has width \(\beta\) in the first \(d/2\) coordinate directions, and width \(1\) in the remaining \(d/2\) coordinate directions.  Then, the volume of \(E^\star\) is proportional to \(\beta^{d/2}\).
Suppose the outliers are spread out at distance \(\beta\) in the ``thin" directions of \(E^\star\).  
\anote{We used $M$ here. Please check. }
Then, even if the outliers have total weight at most \(\gamma d\), \(M = \sum_{i} w_i a_i a_i^\top\) could look like 
\[
M= \begin{bmatrix}
\begin{array}{c|c}
2(1-\gamma) \beta^2 I_{d/2} & 0 \\ \hline
0 & 2 \gamma \beta^2 I_{d/2}
\end{array}
\end{bmatrix}, \text{ with } \det(M) = \beta^d 2^{d/2} (1 - \gamma)^{d/4}\gamma^{d/4} = \beta^{d/2} \cdot (4 \beta^2 (1 - \gamma)\gamma)^{d/4}.
\]

\anote{Draw a figure here?}
Hence, its volume of the corresponding ellipsoid could be larger than $\vol(E^\star) = \beta^{d/2}$ by a factor of  \((\beta^2 \gamma)^{d/4 }\approx \beta^{\Omega(d)}\).  Thinking of \(\beta\) as a large constant, or even polynomially large in \(d\), we would like the exponent in our approximation factor to be brought down by \(\gamma\) as it was in the previous example with only \(d\) points.

\paragraph{Using optimality via complementary slackness.} The key claim that allows us to prove \Cref{lem:overview:volume-approx} is that this bad situation when the \(\gamma d\) total weight on outliers is spread in \(\omega( \gamma d)\) directions, will actually never happen in an \emph{optimal} dual solution!  Specifically, in the above example, the dual would actually be able to achieve a much larger determinant by putting \(\approx d/2\) weight on the inliers and \(\approx d/2\) weight on the outliers.  This is good, because then we will make progress in removing outliers.  

More generally, we will use the \emph{complementary slackness} or KKT conditions for the optimality of the solutions to (MVEE-OC) and (dual-MVEE). Moreover, we will use the following important geometric inequality that uses the log-concavity of the determinant function. This is a geometric form of the famous {\em Brascamp-Lieb} inequality due to Keith Ball that was used to prove certain reverse isoperimetric inequalities~\cite{Ball1989VolumesOS, Ball1991Volume}. 

\begin{restatable}[Geometric Brascamp-Lieb inequality]{claim}{geometricbrascamplieb}\label{clm:logconcave}
    For any PSD matrix $P \in \mathbb{R}^{d \times d}$, weights $w_i > 0$, and unit vectors $\|b_i\| = 1$ such that $\sum_{i} w_i b_i b_i^\top = I_{d}$, we have
    $$
    \log \det (P) \leq \sum_i w_i \log(b_i^\top P b_i).
    $$
\end{restatable}

\anote{Need to check if $Q$ is used consistently. }
The geometric interpretation implication of the claim is the following: the volume of any ellipsoid can be upper bounded by an appropriate geometric mean of the projected lengths of the ellipsoid onto a set of unit vectors that are isotropic (or more generally form a ``tight frame'').  
We now present a short proof of \Cref{lem:overview:volume-approx} assuming \Cref{clm:logconcave}. In \Cref{sec:algorithm} we 
show how it yields the general version in \Cref{lem:vol-approx} that is needed for the analysis of the general algorithm.

\paragraph{Proof of \Cref{lem:overview:volume-approx} for the Origin-centered Setting.}
Let $w$ be the optimal solution to the dual program. Let $M = \sum_i w_i a_i a_i^\top$.
Let $E^{\star} = \{x: x^\top S^{-1} x \leq 1\}$ be the optimal ellipsoid that encloses the inliers. Then, the volume of $E^{\star}$ is proportional to $\det(S)$.

To compare the volumes of $E^\star$ and $E$ consider the PSD matrix $P = M^{1/2}S^{-1} M^{1/2}$. Note that $S$ is full rank by assumption. We have 
$$
\log \det (P) = \log \det (M) + \log\det(S^{-1}).
$$

Consider all points $a_i$ with weight $w_i > 0$ in the optimal dual solution. By the KKT condition (complementary slackness), we have for any such point $a_i$ with $w_i > 0$, $a_i^\top M^{-1} a_i = 1$. Let $b_i \coloneqq M^{-1/2} a_i$. Then this implies that for all $i$ such that $w_i>0$, the transformed points have length $\|b_i\|_2 = 1$ (i.e., they are on the boundary) and 
$$
\sum_{i: w_i >0} w_i b_i b_i^\top = M^{-1/2}\left(\sum_{i: w_i >0} w_i a_i a_i^\top\right) M^{-1/2} = I.
$$
Hence the subset of points $b_1, \dots, b_n$ with weights $w_i>0$ are isotropic i.e., they form a tight frame.  By \Cref{clm:logconcave}, we have 
$$
\log\det(P) \leq \sum_{i:w_i >0} w_i \log(b_i^\top P b_i) = \sum_{i:w_i >0} w_i \log(a_i^\top M^{-1/2} P M^{-1/2} a_i)= \sum_{i:w_i >0} w_i \log(a_i^\top S^{-1} a_i),
$$
since $S^{-1}=M^{-1/2} P M^{-1/2}$. 

We now split the right-hand side into inliers and outliers. For the inliers $a_i \in E^{\star}$, we have $a_i^\top S^{-1} a_i \leq 1$. 
For the outliers, we have \(\|a_i \| \le  \beta\) (from the assumptions of the lemma), and \(E^\star\) has condition number \(O(\beta)\).  
Hence $a_i^\top S^{-1} a_i \leq O(\beta)$ for all outliers. We are given that the total weight $w$ on the outliers is at most $\gamma d$. Then, we have 
$$
\log \det(P) \leq \sum_{i:w_i >0} w_i \log(a_i^\top S^{-1} a_i) \leq \gamma d \cdot O(\log \beta). 
$$
Therefore, we bound the optimal dual solution value by
$$
\log \det (M) = \log \det(S) + \log \det(P) \leq \log \det(S) + \gamma d \cdot O(\log \beta).
$$
This completes the proof of \Cref{lem:overview:volume-approx}. 

We remark that the above proof does not use optimality of the ellipsoid $E^\star$ for the inliers. In particular, the above proof uses complementary slackness conditions for $M$, and not for $S$. Hence $E^\star$ can be any ellipsoid of condition number $\le \beta$ containing $(1-\alpha)n$ points, and our algorithm is guaranteed to find an ellipsoid whose volume is competitive with it. In fact, the algorithm does not need the knowledge of $\beta$, and outputs an ellipsoid for which the guarantee holds for any choice of $\beta$. \anote{12/18: added this line.}

We defer the proof for the general (non-origin-centered) setting, and analysis of the entire algorithm to \Cref{sec:algorithm}.

\subsubsection{Robust Subspace Recovery}
We present an algorithm for robust subspace recovery by leveraging our algorithm for finding the minimum volume confidence ellipsoid. The approach builds on the geometric intuition that if many points lie near a low-dimensional subspace, the smallest-volume ellipsoid covering them must be elongated along that subspace. We then apply a PCA-style procedure that selects the eigenspace corresponding to the large axes lengths of the ellipsoid as an approximation to the underlying subspace.  This is related to the example that we discussed above where the outliers are spread out.  Due to log-concavity of volume, the volume of the ellipsoid is smaller if we stretch the ellipsoid a lot but only in a few directions, as opposed to stretching it a little bit in many directions.  

The main challenge is that the ellipsoid produced by our algorithm may not be well-conditioned. In some directions, it can have very small eigenvalues, which makes it difficult to bound the dimension of the recovered subspace.
To address this, we introduce a random perturbation step. The algorithm adds small, independent Gaussian noise to each point, which enforces an anti-concentration property. This property ensures that in every direction, only a small fraction of points have very small projections. As a result, any ellipsoid that covers most points must be sufficiently “fat’’ in all directions (see \Cref{lem:fatness:random} for the technical claim). 
This ensures a lower bound on the smallest eigenvalues, allowing us to recover a low-dimensional subspace from the ellipsoid’s principal directions.

\subsubsection{Hardness}
We establish the Small Set Expansion (SSE) hardness for our problem, using the robust subspace recovery problem as an intermediate step. 
Hardt and Moitra proved the SSE-hardness of robust subspace recovery for low-dimensional subspaces ($\delta d$-dimensional), which shows that distinguishing whether there exists a low-dimensional subspace containing a certain fraction of points is SSE-hard~\cite{HardtMoitra13}. 
We then connect this to our algorithmic result for subspace recovery in Section~\ref{sec:subspacerecovery}, which shows that if there exists a small-volume ellipsoid covering most of the points, then those points must lie close to some subspace. 

There are two main challenges. (1) The first challenge is that our PCA-based argument introduces a small gap in the subspace dimension, while the hardness by~\cite{HardtMoitra13} assumes the same dimension in both YES and NO cases. To bridge this, we modify their reduction to tolerate such a dimensional slack, yielding a strengthened hardness for robust subspace recovery with a small dimension gap. 
(2) Finally, after establishing that many points lie close to a low-dimensional subspace, we still need to argue that there exists a subspace that exactly contains a large fraction of these points.
To address this, we replace the randomized construction used by~\cite{HardtMoitra13} with a deterministic construction of the robust subspace recovery instance. 
This discrete structure ensures that when many points are close to a subspace, we can identify a subspace of comparable dimension that exactly contains a large subset of them (\Cref{lem:exact-contain}), thereby completing the reduction and establishing the SSE-hardness of the minimum volume confidence ellipsoid problem.


\subsection{Related Work}\label{sec:related}
\paragraph{Finding Small Volume Confidence Sets}
The problem of learning minimum volume confidence sets has been widely studied in statistics and machine learning, due to applications in uncertainty quantification, learning density level sets, anomaly detection, or distribution support estimation~\cite{einmahl1992generalized, polonik1997minimum, rousseeuw1985multivariate, garcia2003level, scott2005learning, gao2025volume}. The works of \cite{polonik1999concentration, scott2005learning} design statistically efficient methods when the VC dimension of the class of confidence sets $\mathcal{C}$ is bounded. This problem has also received recent attention due to efficiency or volume optimality considerations in conformal prediction and uncertainty quantification~\cite{lei2013distribution, Sadinle2016LeastAS, gao2025confidence, gao2025volume, braun2025volume, srinivas2025online}; see \cite{gao2025confidence, gao2025volume} for more references.  

The most relevant prior work is the work on small volume learning confidence balls in high dimensions~\cite{gao2025confidence}. The class $\mathcal{C}$ of Euclidean balls in $d$ dimensions is already computationally challenging for large $d$. The work of~\cite{gao2025confidence} studies both improper learning (where the confidence set that is output does not need to be from $\mathcal{C}$) and proper learning algorithms, and designs polynomial time algorithms with rigorous guarantees of approximate volume optimality  
for $\mathcal{C}$ being the set of Euclidean balls (or unions of $O(1)$ balls). The best known polynomial time proper learning algorithm for Euclidean balls achieves $\Gamma=1+\tilde{O}(1/\log d)$ (volume approximation of $\exp(\tilde{O}(d/\log d)$) ~\cite{badoiu2002approximate}, while the best improper learning algorithms for learning confidence balls outputs ellipsoids that achieve $\Gamma=1+\tilde{O}(1/\sqrt{d})$ i.e., a volume approximation of $\exp(\tilde{O}(\sqrt{d}))$ ~\cite{gao2025confidence}. This is complemented by a computational intractability result for proper learning better than $\Gamma=1+d^{-\epsilon}$ for some constant $\epsilon>0$. The main technical idea is to design a robust proxy for computing the center, inspired by algorithms for robust mean estimation. 

These results do not give any guarantees when competing against the class $\mathcal{C}$ of ellipsoids in $d$ dimensions. For the learning confidence ellipsoids, the challenge is to learn both the center (location), and more importantly, the shape of a good ellipsoid for the distribution. It is an interesting open question to generalize our results to unions of ellipsoids, or use our new algorithm to get an improved improper algorithm for learning Euclidean balls. 

\paragraph{Comparison to Prior Work on Robust Estimation.}
A related direction that has received much attention in both statistics and computer science is robust estimation, where the goal is to develop methods for parameter estimation that are robust to outliers. Here, the uncorrupted data points i.e., {\em the inliers} are drawn i.i.d from a nice distribution $\mathcal{D}_{\theta}$ over $\R^d$ with a parameter $\theta$ that we would like to estimate, e.g., a Gaussian with an unknown mean, or unknown covariance. In the input dataset, $1-\alpha$ fraction of the samples are drawn from the inlier distribution, while the remaining $\alpha$ fraction are {\em outliers}, which are arbitrary points in $\R^d$ (stronger contamination models allow for replacing an arbitrary $\alpha$ fraction of the inliers with arbitrary outliers). There is a rich body of work over the past six decades developing robust methods for several basic parameter estimation tasks under different contamination models~\cite[see e.g., extensive books on this topic][]{huber2004robust,ars_book}. Recent work has even used ideas from robust estimation to find small confidence balls~\cite{gao2025confidence}.  

Most relevant to us are the works on robust covariance estimation, and robust estimators for the scatter of the distribution in high dimensions. The important works of \cite{diakonikolas2016robust,lai2016agnostic} 
 gave polynomial time algorithms for robust mean and covariance estimation of Gaussians in high dimensions, with near optimal dependencies.  
 Over the past decade there have been several polynomial time algorithms proposed based on filtering, sophisticated convex relaxations including sum-of-squares and many other techniques~\cite[please see][for a recent book on the topic]{ars_book}. In particular, for robust covariance estimation, these works assume Gaussianity of the inlier distribution, or various fourth moment or higher-order moment assumptions on the inlier distribution. Such assumptions are necessary to define a certifiable property (e.g., hypercontractivity, anticoncentration) of the inlier distribution in parameter estimation tasks.  In contrast, in our setting we do not make any assumptions on the data points --- in particular, the points in the optimal solution do not need to satisfy any Gaussianity or hypercontractivity or anticoncentration condition. 
 Handling arbitrary distributions is important for applications in uncertainty quantification. Moreover, our goal is not to estimate or recover any ground-truth parameter, but instead to find some small volume confidence ellipsoid that contains most points.   

The classic work of Rousseeuw~\cite{rousseeuw1985multivariate} studies the same problem as the famous {\em minimum volume estimator} (MVE) for location and scatter parameters of a distribution. See the survey by \cite{van2009minimum} for several desirable properties including its high breakdown point and other statistical properties.  
As described earlier, this problem and heuristics for it have also found applications in control theory, dynamical systems, and uncertainty quantification. However, known algorithms for this problem are either computationally inefficient, or do not come with rigorous guarantees of (approximate) volume optimality.  

\paragraph{Robust Subspace Recovery.} Finally, the problem of robust subspace recovery in high dimensions has received significant attention on the algorithms side in the past decade starting with the influential work of Hardt and Moitra~\cite{HardtMoitra13}. Polynomial time algorithms that recover a ground-truth subspace in the presence of outliers are only known under additional distributional assumptions about the inliers (points on the subspace)~\cite{bakshi2021list}, or non-degeneracy assumptions/ the smoothed analysis setting on the outliers~\cite{HardtMoitra13, BCPV}. In particular, the work of Hardt and Moitra gives polynomial time algorithms that recover the ground truth subspace of dimension $k$ when an $\alpha < 1-k/d$ fraction of outliers are in general position in the ambient space, and the $1-\alpha$ fraction of inliers are in general position in the subspace. One of their algorithms uses a beautiful connection to radial isotropy~\cite{HardtMoitra13}. This uses a different convex program due to Barthe~\cite{Barthe1997}, which is also related to a reverse version of the Brascamp-Lieb inequality; see also ~\cite{hopkins2020pointlocationactivelearning, Diakonikolas2022ASP} for related work on the Forster transform and the radial isotropy position. Please see \Cref{sec:subspacerecovery} for a more detailed comparison to prior work on robust subspace recovery.

\section{Algorithm and Analysis}\label{sec:algorithm}
\vnote{need to revisit this overview}
Let \(E^\star\) be the smallest ellipsoid that contains at least \((1 - \alpha)\)-fraction of the points in \(A=\{a_1, \dots, a_n\}\). 
We begin by finding a bounding ball \(\widehat{B}\) that contains all points in \(E^\star\), and has radius at most twice that of the maximum axis length of \(E^\star\).  This can be done by a simple search over all pairs of points in \(A\) (\Cref{lem:bounding-ball}).   
Restricted to \(\widehat{B}\), it suffices to find an ellipsoid that captures close to \((1 - \alpha)n\) of the embedded points, with volume approximately that of the smallest volume ellipsoid of condition number at most \(\beta\) that captures \((1 - \alpha)n\) points.   

We will refer to the \((1 - \alpha)n\) points in $A$ that are captured by the optimal \(E^\star\) 
as \emph{inliers}.  We refer to the remaining \(\alpha n\) points in $A$ 
as \emph{outliers}.  

Our algorithm proceeds by iteratively solving the dual program of the minimum volume ellipsoid (\Cref{prop:dual-MVE}).  This gives us weights, which are essentially a distribution supported on the \(a_i\) (the weights sum to \(d\) instead of 1).  We remove each point from the dataset with probability that is proportional to the weight it is assigned in the dual.  For a tunable parameter \(0 < \gamma < 1\), we show the following.  In a given iteration, consider the following cases. 
\begin{enumerate}[(a)]
    \item If more than \(\gamma d\) weight is placed on outliers, then when we remove points, more than a \(\gamma\) fraction of the points that are removed will be outliers, in expectation.  

    \item If less than \(\gamma d\) weight is placed on outliers, then the minimum volume ellipsoid enclosing all of the remaining points is already a good approximation in volume to the minimum volume ellipsoid enclosing only the inliers (\Cref{lem:vol-approx}). 
\end{enumerate}

Since we are only removing points in each iteration, the volume of the minimum volume enclosing ellipsoid of the remaining points can only decrease in each iteration.  Thus if we iterate \(R \approx \frac{\alpha n}{\gamma d}\) times, we will have by the pigeonhole principle that on one of these iterations, less than \(\gamma d\) mass was placed on outliers, and therefore the volume approximation held on that and all following iterations.  The fact that we make progress removing outliers be shown with Chebyshev's inequality, after observing that the point removal process forms a Martingale (\Cref{lem:outlier-removal-progress}). 
We can also ensure that we do not remove too many points, inliers and outliers, over the \(R\) rounds via Chebyshev's inequality (\Cref{lem:dont-remove-too-many-points}).  

Finding the bounding ball, and part (a) of the outline above, which argues about the number of points removed, have proofs relegated to \Cref{app:algo-details}.  In this section, we prove (b), which is the main technical contribution.

\subsection{SDP and KKT condition}

The main technical statement that we need to prove is about the form of the optimum of SDP for the minimum volume enclosing ellipsoid.  We are interested in finding the minimum volume enclosing ellipsoid (MVEE) of \emph{arbitrary center}.  This can be found by the following convex program. 

For a given set of points \(A\), the minimum volume ellipsoid that contains \emph{all} of the points \(A\) can be found via the following convex program.  
\begin{proposition}[Minimum Volume Enclosing Ellipsoid, see e.g.\ \cite{SunF2004, ToddY2007} ] \label{prop:mvee-center}
    For a set of points \(a_1, \dots, a_n\), the minimum volume ellipsoid enclosing the points is given by 
    \[\{x : (x-c_\star)^\top S_\star (x-c_\star) \le 1 \}\]
    for X = \(S_\star^{1/2}\) and $z = S_\star^{1/2} c_\star$ be the optimum of the convex program
    \begin{align*}
        \min_{X \in \mathbb{R}^{d \times d}, z \in \R^d } & - 2\log\det X  \\
        \text{s.t. } & \|Xa_i - z\|_2^2 \le 1, \quad \forall i \in [n],   \\
        & X \succeq 0.
    \end{align*}
\end{proposition}

The dual of this program is the following.
\begin{proposition}[Dual program of MVEE]\label{prop:dual-MVE-center}
    The dual of the program in \Cref{prop:mvee-center} is given by the convex program
    \begin{align*}
        \max_{w_i} ~&\log \det \left( \sum_{i} w_i (a_i-c) (a_i-c)^\top \right) \\
        \text{s.t. } ~& w_i \ge 0, \quad \forall i \in [n] \\
        &\sum_{i} w_i = d\\
        & c = \frac{1}{d}\sum_i w_i a_i
    \end{align*}
    We refer to the value \(w_i\) as the \emph{weight} assigned to \(a_i\).  While it is not given as an explicit constraint, it is without loss of generality that in the optimal solution, every \(w_i \le 1\).  
    Furthermore, strong duality holds, and the optimal value of the primal program is equal to the optimal value of the dual program.  
\end{proposition}

Another way to find the MVEE (minimum volume enclosing ellipsoid of arbitrary center) is to reduce to the problem of finding the minimum volume origin-centric ellipsoid.  That is, for every point \(a_i \in \mathbb{R}^{d}\), we identify it with the ``lifted" point 
\[\widetilde{a}_i \in \mathbb{R}^{d + 1} = [a_i, ~1].\]
Let \(E^\star\) be the MVEE of the original \(a_i\)s.  Let \(\widetilde{E}\) be the minimum volume origin centric ellipsoid containing the \(\widetilde{a}_i\)s.  These have the nice property that \(E^\star\) is exactly the \emph{slice} of \(\widetilde{E}\) corresponding to the last index being \(1\),
\[E^\star = \widetilde{E} ~\cap~ \{x \in \mathbb{R}^{d + 1} : x_{d + 1} = 1\} .\]
The minimum volume origin centric ellipsoid and its dual problem are given by the following SDPs.  We note that the optimal dual solution for the MVEE problem on the \(a_i\)s (\Cref{prop:dual-MVE-center}), and the optimal dual solution for the origin centric minimum volume origin centric ellipsoid containing the \(\widetilde{a}_i\)s are exactly the same.  We provide a proof in \Cref{prop:dual-equiv-existence}.  To prove the volume approximation guarantee, we find it convenient to adopt the viewpoint of the origin centric dual.  

\begin{proposition}[Minimum Volume Origin Centric Ellipsoid, see e.g.\ \cite{boyd2004convex}] \label{prop:centered-mve}
    For a set of points \(a_1, \dots, a_n\), the minimum volume centered ellipsoid enclosing the points is given by 
    \[\{x : x^\top (M^\star)^{-1} x \le 1 \}\]
    for \((M^\star)^{-1} = Q\) the optimum of the SDP
    \begin{align*}
        \max_{Q \in \mathbb{R}^{n \times n}} &\log\det Q  \\
        \text{s.t. } & \forall i \in [n], ~ a_i^\top Q a_i \le 1  \\
        & Q \succeq 0.
    \end{align*}
\end{proposition}

\begin{proposition}[Dual program of origin centered MVE, see e.g.\ \cite{boyd2004convex}]\label{prop:dual-MVE}
    The dual of the program in \Cref{prop:centered-mve} is given by the SDP
    \begin{align*}
        \max_{w_i} ~&\log \det \left( \sum_{i} w_i a_i a_i^\top \right) \\
        \text{s.t. } ~& \forall i \in [n], ~w_i \ge 0 \\
        &\sum_{i} w_i \le d.
    \end{align*}
    We refer to the value \(w_i\) as the \emph{weight} assigned to \(a_i\).  While it is not given as an explicit constraint, it is without loss of generality that in the optimal solution, every \(w_i \le 1\).  
    Furthermore, strong duality holds, and the optimal value of the primal program is equal to the optimal value of the dual program.  
\end{proposition}

\begin{proof}[Proof of~\Cref{prop:dual-MVE-center}]
Let $w_i\ge 0$ be Lagrange multipliers for the constraints
$\|Xa_i-z\|_2^2\le 1$. The Lagrangian relaxation of MVEE is
$$
\mathcal{L}(X,z,w)
= -2\log\det X + \sum_{i=1}^n w_i (\|Xa_i-z\|_2^2-1 ).
$$
\emph{Stationarity in $z$.} Taking the gradient in $z$ and setting it to zero gives
$$
0=\nabla_z\mathcal{L}=2\sum_i w_i(z-Xa_i)
\ \Longrightarrow\
z=X\,c,\qquad
c:=\frac{\sum_i w_i a_i}{\sum_i w_i}.
$$
Define the matrix
$$
M:=\sum_{i=1}^n w_i(a_i-c)(a_i-c)^\top \succeq 0.
$$
With $z=Xc$, the quadratic term in the Lagrangian is
$$
\sum_{i=1}^n w_i\|Xa_i-z\|_2^2
=\sum_{i=1}^n w_i\|X(a_i-c)\|_2^2
=\operatorname{tr}(X^\top X M).
$$
Let $S=X^2$. Using
$-2\log\det X=-\log\det S$, the Lagrangian becomes
\[
\mathcal{L}(S,w)=-\log\det S+\operatorname{tr}(SM)-\sum_i w_i.
\]
\emph{Stationarity in $S$.} Minimizing over $S\succ 0$ yields the unique optimizer
$S=M^{-1}$, and the minimum value
\[
\min_{S\succ 0}\mathcal{L}(S,w)
=\log\det M + d-\sum_i w_i.
\]
Hence, the dual program of MVEE is
\begin{align*}
\max_{w\ge 0}\ &\log\det M +d-\sum_i w_i \\
\text{s.t. } & c=\frac{\sum_i w_i a_i}{\sum_i w_i}, \\
& M = \sum_{i=1}^n w_i(a_i-c)(a_i-c)^\top.
\end{align*}
Scaling all weights by a positive factor $t$ leaves $c$ unchanged and scales
$M$ by $t$. Thus, we have $\log\det M =\log\det M +d\log t$ while
$-\sum_i tw_i=-t\sum_i w_i$. Optimizing over $t$ makes the total weight
$\sum_i w_i = d$ at optimum.
Thus, we get the dual program.

\emph{Strong duality.} The primal program MVEE has a strictly feasible solution. Therefore, strong duality holds and the optimal values of the
primal and dual are equal.

\emph{Bounds $w_i\le 1$ at optimum.} 
Let $(X^\star, z^\star)$ be any optimal primal solution. Let $w_i^\star$ be the optimal dual solution and $M^\star$ be the corresponding matrix of $w_i^\star$. Then, we have $(M^\star)^{-1} = (X^\star)^2$, which implies
$$
I_d  = X^\star M^\star X^\star \;=\; \sum_i w_i (X^\star(a_i-c))(X^\star(a_i-c))^\top.
$$
Let $y_i=X^\star(a_i-c)$. By complementary slackness (\Cref{prop:cs}), if $w^\star_i>0$ then
$\|y_i\|=1$. Thus, we have 
$$
1= y_i^\top I_d y_i = \sum_j w_j (y_i^\top y_j)^2 \ge w_i (y_i^\top y_i)^2 = w_i,
$$
so $w_i\le 1$ for all $i$.
\end{proof}

\begin{proposition}[Complementary Slackness]\label{prop:cs}
Let $(X^\star, z^\star)$ be any optimal primal solution to the MVEE problem in
\Cref{prop:mvee-center}, and let $w^\star=(w_1^\star,\dots,w_n^\star)$ be any
optimal dual solution to \Cref{prop:dual-MVE-center}. Then, for every $i\in[n]$,
$$
w_i^\star\big(\,\|X^\star a_i - z^\star\|_2^2 - 1\,\big)=0.
$$
In particular, if $w_i^\star>0$ then $\|X^\star a_i - z^\star\|_2^2 = 1$, and if
$\|X^\star a_i - z^\star\|_2^2 < 1$ then $w_i^\star=0$.
\end{proposition}

\begin{proof}
Let $(X^\star,z^\star)$ be any optimal primal solution and $w^\star$ be any optimal dual solution. The KKT complementary slackness
conditions are
$$
w_i^\star\big(\|X^\star a_i-z^\star\|_2^2 - 1\big)=0,\qquad i=1,\dots,n.
$$
Hence, for any index with $w^\star_i>0$, we have $\|X^\star a_i-z^\star\|_2^2=1$.
\end{proof}

\subsection{Volume Approximation}

The main component of our proof is a statement that if the optimal dual solution to the origin centric minimum volume enclosing ellipsoid problem does not put much weight on outlier points, then the volume of the enclosing ellipsoid of all of the points is comparable to the smallest ellipsoid of bounded condition number, that contains all of the inliers.  Thus, the outliers that are still included do not blow up the volume of the enclosing ellipsoid by too much.  

The crux of the argument is in the following claim. 

\anote{12/18: changed $Q$ to $P$ to avoid confusion.}
\vnote{use thm-restate with 1.6}
\geometricbrascamplieb*

\begin{proof}
    Let $P = U\Lambda U^\top$ be the SVD decomposition of $P$, where $\Lambda=\diag(\lambda_1,\cdots,\lambda_{d + 1})$. Let $c_i = U^\top b_i$. Then, we have $\|c_i\| = \|b_i\| =  1$ and $\sum_i w_i c_ic_i^\top = \sum_i w_i b_ib_i^\top = I_{d+1}$. For each $i$, we have 
    $$
    \log(b_i^\top P b_i) = \log(c_i^\top \Lambda c_i) = \log \left(\sum_{j=1}^{d+1} \lambda_j c_{ij}^2 \right) \geq \sum_{j=1}^{d+1} c_{ij}^2\log(\lambda_j),
    $$
    where the last inequality is due to the concavity of the log function and Jensen's inequality.

    Since $\sum_i w_i c_ic_i^\top = I_{d+1}$, for every $j$, we have $\sum_i w_i c_{ij}^2 = 1$. Thus, we have 
    $$
    \sum_i w_i \log(b_i^\top P b_i) \geq \sum_i w_i \sum_{j=1}^{d+1} c_{ij}^2\log(\lambda_j) = \sum_{j=1}^{d+1} \log(\lambda_j) = \log\det(P). 
    $$
\end{proof}

We now bound the volume of the ellipsoid provided by our algorithm. 
We embed each point $a_i \in \R^d$ to point $\wta_i := (a_i,1) \in \R^{d+1}$ by adding the $(d+1)$th coordinate with value $1$. 
We then solve the origin-centered minimum volume ellipsoid for $\wta_1,\dots, \wta_n \in \R^{d+1}$.
Let $w$ be the optimal solution to the corresponding dual program in~\Cref{prop:dual-MVE}. Then, we consider the ellipsoid $E$ in $\R^d$ as follows. Let center $c = \sum_i w_i a_i/(d+1)$ and $M = \sum_i w_i (a_i - c)(a_i-c)^\top$. The ellipsoid $E$ is $\{x: (x-c)^\top M^{-1} (x-c) \leq \frac{d}{d+1}\}$.

\begin{lemma}[Volume approximation]\label{lem:vol-approx}
    Let 
    \(a_i, \dots, a_n \in \mathbb{R}^{d}\) be points contained in a ball of radius \(2 \beta\), and  
    $E^\star = \{x: (x-c_\star)^\top (S^\star)^{-1} (x-c_\star) \leq 1\}$ 
    be an ellipsoid 
    with width at least \(1\) in every direction
    that encloses the inliers.  If the optimal dual solution (\Cref{prop:dual-MVE}) places at most \(\gamma (d+1)\) weight on outliers, then we have the shape matrix $\overline{M}=\frac{d}{d+1}M$ of the ellipsoid $E = \{x: (x - c)^\top (\overline{M})^{-1} (x - c) \le 1 \}$ satisfies
    \[\log \det (\overline{M}) \le \log \det(S^\star) + \gamma (d+1) \cdot \log (16\beta^2).\]
\end{lemma}

\begin{proof}
    It is convenient to rescale the matrix \((S^\star)^{-1}\) to a matrix \(S^{-1}\) such that 
    \[E^\star = \left\{x: (x-c_\star)^\top S^{-1} (x-c_\star) \leq \frac{d}{d+1} \right\}.\]

    We embed each point $a_i \in \R^d$ to point $\wta_i := (a_i~; 1) \in \R^{d+1}$.  Let $w$ be the optimal solution to the origin centric dual program (\Cref{prop:dual-MVE}) for the lifted instance in $\R^{d+1}$. Then, we have $\sum_i w_i = d+1$.  and 
    the lifted matrix $\widehat M \in \R^{(d+1)\times(d+1)}$
    \begin{align*}
        \widehat M := \sum_{i=1}^n w_i\,\wta_i \wta_i^\top = \begin{pmatrix} \sum_i w_i a_i a_i^\top & \sum_i w_i a_i\\ \sum_i w_i a_i^\top & \sum_i w_i \end{pmatrix}.
    \end{align*}
    
    Let 
    $c = \sum_{i=1}^n w_i a_i /(d+1)$ and $M = \sum_{i=1}^n w_i (a_i - c)(a_i-c)^\top$. 
    We first show that for each lifted point $\wta_i \in \R^{d+1}$ with $\wta_i^\top \widehat M^{-1} \wta_i \leq 1$, its corresponding point $a_i \in \R^d$ is contained in the ellipsoid $E = \{x: (x-c)^\top M^{-1} (x-c) \leq \frac{d}{d+1}\}$.  Note that \(M^{-1}\) is scaled so that the right-hand-side of the inequality is \(\frac{d}{d + 1}\), making it directly comparable to \(S^{-1}\).
    
    To show this, consider the Schur complement $\widehat M/\widehat{M}_{d+1,d+1}$ (\Cref{fact:schur})  of the $(d+1, d+1)$ entry of $\widehat M$, which is the $d\times d$ matrix 
    $$
\widehat M/\widehat{M}_{d+1,d+1} = \sum_{i=1}^n w_i a_i a_i^\top - (d+1)c\cdot \frac{1}{d+1} \cdot (d+1)c^\top = \sum_{i=1}^n w_i (a_i-c)(a_i-c)^\top = M.
    $$
    Then, the inverse of $\widehat M$ is 
    $$
    \widehat M^{-1} =         
        \begin{pmatrix}
        M^{-1} & -M^{-1} c\\
        - c^\top M^{-1} & c^\top M^{-1} c + \frac{1}{d+1}
        \end{pmatrix}.
    $$
    Hence, we have for every point $\wta_i$
    $$
    \wta_i \widehat M^{-1} \wta_i = (a_i - c)^\top M^{-1} (a_i -c) + \frac{1}{d+1},
    $$
    which implies that $a_i \in E$ for every $\wta_i$ with $\wta_i \widehat M^{-1} \wta_i \leq 1$.
    
    We now show that the log determinant of $\widehat M$ is close to that of $M$. By the Schur complement, we have 
    \begin{align*}
        \log\det \widehat M &= \log (d+1) + \log\det\left(\sum_{i=1}^n w_i a_i a_i^\top - \frac{1}{d+1} \cdot \left(\sum_{i=1}^n w_i a_i\right)\left(\sum_{i=1}^n w_i a_i^\top \right)  \right) \\
        &= \log (d+1) + \log\det \left(\sum_{i=1}^n w_i (a_i - c)(a_i - c)^\top\right) = \log (d+1) + \log \det M. 
    \end{align*}

    Recall $E^\star = \{x : (x-c_\star)^\top S^{-1} (x-c_\star) \leq \frac{d}{d+1}\}$ is the ellipsoid 
    with width at least \(1\) in every direction
    that encloses the inliers. Then, the volume of $E^*$ is proportional to $\det(S)$. We define the following lifted matrix $\widehat S^{-1} \in \R^{(d+1)\times (d+1)}$
    \begin{align*}
        \widehat S^{-1} := 
        \begin{pmatrix}
        S^{-1} & -S^{-1} c_\star\\
        - c_\star^\top S^{-1} & c_\star^\top S^{-1} c_\star + \frac{1}{d+1}
        \end{pmatrix}.
    \end{align*}
    This lifted matrix corresponds to the shape of an origin-centered ellipsoid in $\R^{d+1}$. For each embedded point $\wta_i \in \R^{d+1}$, we have 
    \begin{align*}
        \wta_i^\top \widehat S^{-1} \wta_i = (a_i-c_\star)^\top S^{-1} (a_i-c_\star) +\frac{1}{d+1}.
    \end{align*}
    We then show that the log determinant of $\widehat S$ is close to that of $S$. By the Schur complement, we have 
    \begin{align*}
        \log\det (\widehat S^{-1}) = -\log (d+1) + \log\det (S^{-1}) =  -\log (d+1) - \log\det S.
    \end{align*}

    Let $P = \widehat M^{1/2} \widehat S^{-1} \widehat M^{1/2}$. We have 
    \begin{align*}
        \log \det P = \log\det \widehat M + \log\det (\widehat S^{-1}) = \log\det M-\log\det S.
    \end{align*}
    
    We then use the geometric Brascamp-Lieb inequality in \Cref{clm:logconcave} to bound $\log\det P$. 
    Consider all points $\wta_i$ with weight $w_i >0$ in the optimal dual solution.
    By the complementary slackness, we have $\wta_i^\top \widehat{M}^{-1} \wta_i =1$.
    For any such point $\wta_i \in \R^{d+1}$, let $b_i = \widehat M^{-1/2} \wta_i \in \R^{d+1}$. Then, we have $\|b_i\|_2 = 1$ and 
    $$
    \sum_{i: w_i >0} w_i b_i b_i^\top = \widehat M^{-1/2}\left(\sum_{i: w_i >0} w_i \wta_i \wta_i^\top\right) \widehat M^{-1/2} = I_{d+1}.
    $$
    
    Thus, by \Cref{clm:logconcave}, we have 
    \begin{align*}
        \log\det P &\leq \sum_i w_i \log(b_i^\top P b_i) 
        = \sum_i w_i \log \left(\wta_i^\top \widehat S^{-1} \wta_i\right),
    \end{align*}
    since $P = \widehat M^{1/2} S^{-1} \widehat M^{1/2}$ and $b_i = \widehat M^{-1/2} \wta_i$.

    We now split the right-hand side into inliers and outliers. For the inliers $a_i \in E^*$, we have 
    $$
    \wta_i^\top \widehat S^{-1} \wta_i =  (a_i-c_\star)^\top S^{-1} (a_i-c_\star) +\frac{1}{d+1} \leq \frac{d}{d+1}+\frac{1}{d+1} = 1.
    $$ 
    The outliers $a_i \not\in E^*$ are contained in a ball of radius \(2 \beta\), so \(\|a_i - c_\star\| \le 4 \beta\).  We also have that the width of \(E^\star\) is at least \(1\) in every direction, so \(S^{-1} \preceq I\), so 
    \begin{align*}
    \wta_i^\top \widehat S^{-1} \wta_i &= (a_i - c_\star)^\top S^{-1} (a_i- c_\star) \cdot \frac{d}{d + 1} + \frac{1}{d+1}\\
    &\leq 16\beta^2 \cdot \frac{d}{d+1} +\frac{1}{d+1} \leq 16\beta^2.
    \end{align*}
    \vnote{modified above calculation to account for the aspect ratio correctly.  Please double check!}
    Suppose the weight $w$ puts $\gamma (d+1)$ mass on outliers. Then, we have 
    \begin{align*}
    \log \det(P) &\leq \sum_i w_i \log \left(\wta_i^\top \widehat S^{-1} \wta_i\right) \leq \gamma (d+1) \cdot \log\left(16\beta^2\right). 
    \end{align*}
    Therefore, we bound the optimal dual solution value by
    $$
    \log \det (M) = \log \det(S) + \log \det(P) \leq \log \det(S) + \gamma (d+1) \cdot \log (16\beta^2).
    $$

   \noindent Finally recall that, for convenience, we worked with \(M^{-1}\) being a scaling of \(\overline{M}^{-1}\) such that 
    \[E = \left\{x: (x-c)^\top M^{-1} (x-c) \leq \frac{d}{d+1}\right\} = \left\{x: (x-c)^\top \overline{M}^{-1} (x-c) \leq 1 \right\},\]
    and similarly \(S^{-1}\) is a scaling of \((S^\star)^{-1}\) such that 
    \[E^\star = \left\{x: (x-c_\star)^\top S^{-1} (x-c_\star) \leq \frac{d}{d+1}\right\} = \left\{x: (x-c_\star)^\top (S^\star)^{-1} (x-c_\star) \leq 1 \right\}.\]
    Thus we have 
    $$
    \log \det (\overline{M}) \leq \log \det(S^\star) + \gamma (d+1) \cdot \log (16\beta^2).
    $$
\end{proof}

\subsection{Algorithmic guarantee}

\begin{figure}[ht]
\begin{tcolorbox}
\begin{center}
    \textbf{Approximate Coverage Ellipsoid}
\end{center}
\textbf{Input} sample points $A=\{a_1,\dots,a_n\}\subseteq\R^d$, target miscoverage $0<\alpha<1$, 
parameter $0<\gamma<1$ \\
\textbf{Output} an ellipsoid $\widehat{E}\subseteq\R^d$
\begin{enumerate}
\item Initialize the candidate set $\mathfrak{E}\gets\varnothing$.
\item For each ordered pair $(k_1,k_2)\in[n]\times[n]$: \label{algline:main-step}
  \begin{enumerate}
  \item Set a coarse bounding ball $\widehat{B}=B(\widehat{c},\widehat{r})$ with $\widehat{c}=a_{k_1}$ and $\widehat{r}=\|a_{k_1}-a_{k_2}\|$.
  \item Form the lifted set
  \[
  \widetilde{A}\ \gets\ \Big\{\wta_i\in\R^{d+1}:~a_i\in A\cap \widehat{B}\Big\},
  \qquad
  \wta_i\ :=\ \Big[a_i,~ 1\Big].
  \]
  \item Let
  $J = \lceil c \cdot (\alpha n/\gamma d)\rceil$ for a constant $c$.
  Repeat for $j=1,\dots,J$:
    \begin{enumerate}
    \item Solve the \textbf{dual} SDP on $\widetilde{A}$ (Prop.~\ref{prop:dual-MVE}) to obtain weights $\{w_i\in[0,1]\}$ for $\wta_i\in\widetilde{A}$.
    \item Independently remove each $\wta_i\in\widetilde{A}$ with probability $w_i$.
    \end{enumerate}
  \item Let $\widehat{E}$ be the minimum volume enclosing ellipsoid of $\widehat{A}\gets\{a_i\in\R^d:\ \wta_i\in\widetilde{A}\}$.
  \item If $|\widehat{E}\cap A|\ \ge\ (1-O(\alpha/\gamma))\,n$, add $\widehat{E}$ to $\mathfrak{E}$.
  \end{enumerate}
\item Repeat step \ref{algline:main-step} \(O(\log n)\) times.
\item Return the smallest-volume ellipsoid in $\mathfrak{E}$.
\end{enumerate}
\end{tcolorbox}
\caption{Approximation algorithm for the minimum volume ellipsoid covering at least $1-\alpha$ fraction of points.}
\label{fig:ellipsoid_tcolor}
\end{figure}

\anote{12/18: Edited running time. Check!}
\vnote{\(\beta\) does not have to be an input to the algorithm}
\vnote{fix for success probability \(1 - o(1)\)}
\begin{theorem}[Approximately Optimal Confidence Ellipsoid]\label{thm:approx-optimal-confidence-ellipsoid}
   Fix a set of points \(A \subseteq \mathbb{R}^d\), a target miscoverage rate \(0 < \alpha < 1\), a condition number \(\beta \ge 1\), a parameter \(0 < \gamma < 1\), 
   and a success rate \(0 \le c_1 < 1\).  Then, in \(\poly(n, d) \cdot T_{\mathrm{SDP}} \) time, 
   with probability \(1 - o(1)\), the algorithm in \Cref{fig:ellipsoid_tcolor} outputs an ellipsoid \(\widehat{E}\), such that 
    \(\widehat{E}\) approximately satisfies coverage,  
    \[|\widehat{E} \cap A | \ge \left(1 - O\left(\frac{\alpha}{\gamma} \right) \right)n, \]
    and is approximately volume optimal 
    \[\vol(\widehat{E}) \le \vol(E^\star) \cdot ( 4 \beta )^{2 \gamma (d + 1)},\]
    where \(E^\star\) is the minimum volume ellipsoid of condition number at most \(\beta\) that encloses at least \((1 - \alpha)|A|\) points of \(A\). Here $T_{\mathrm{SDP}}$ is the running time for solving the $\poly(n,d)$ sized semidefinite programs in \Cref{prop:centered-mve} and \Cref{prop:dual-MVE}.
\end{theorem}
\anote{1/2: Added}
\vnote{moved into proof: We remark that the lower bound condition on the number of samples $n \ge \gamma d/\alpha$ is essentially without loss of generality: we can take $\ell= \lceil \frac{\gamma d}{\alpha n}\rceil$ copies of each point and run the algorithm on the dataset with $\ell$ copies of $A$. Note that the ellipsoid either covers all copies of a point, or none of it -- hence the coverage and volume guarantees carry over.  }

The running time $T_{\mathrm{SDP}}$ of standard semidefinite programming solvers like the ellipsoid method for $\poly(n,d)$-sized SDPs including \Cref{prop:centered-mve} and \Cref{prop:dual-MVE} is polynomial in $n,d$ and $\log(1/\eta)$ where $\eta>0$ is the desired precision for solving the SDP~\cite{Toddbook}. Algorithms for solving the specific semidefinite program~\Cref{prop:centered-mve} for the minimum volume enclosing ellipsoid has received particular attention, with faster algorithms based on interior-point methods and first-order methods; see \cite{Toddbook, cohen19a} for details and more references. We will ignore numerical precision issues in the analysis, and assume that the SDP can be solved to optimality.     
\begin{proof}
    We can begin by assuming the following parameter settings. 
    If \(\alpha < \frac{1}{n}\), we can simply output the minimum volume enclosing ellipsoid containing all of \(A\), by solving the relevant SDP (\Cref{prop:centered-mve}, with lifted points).  If \(\gamma \le \alpha\), then \(\alpha / \gamma \ge 1\), so a trivial ellipsoid with \(0\) volume and \(0\) coverage satisfies the guarantee.  Thus, we can assume that 
    \(\gamma > \alpha \ge \frac{1}{n}.\)
    This allows us to assume that 
    \begin{equation}\alpha n / \gamma d \ge 1.\label{eq:alpha-n-big}\end{equation} 
    If this is not the case, we can simply include $\ell= \lceil \frac{\gamma d}{\alpha n}\rceil$ copies of each point in \(A\).  Since the ellipsoid either covers all copies of a point or none of them, the coverage and volume guarantees will carry over.  The new number of points 
    \[n' = \ell \cdot n \le \left(\frac{\gamma d}{\alpha n} + 1 \right) n \le \gamma d n + n = \mathrm{poly}(n),\]
    is still polynomial in the original \(n\), and thus the runtime guarantees will still hold.  

    We begin by using the procedure in \Cref{lem:bounding-ball} to construct a set \(\mathfrak{B}\) of at most \(n^2\) potential bounding balls for the points in \(A\).  Let \(E^\star = \{x \in \mathbb{R}^d : (x-c^\star)^\top S^{-1} (x-c^\star) \le 1 \}\) be the minimum volume ellipsoid (of arbitrary center), with condition number at most \(\beta\), that encloses the points \(A\). \vnote{is condition number defined?} \vnote{find and replace for ``aspect ratio" $\rightarrow$ ``condition number"}
    We are guaranteed that there exists a ball \(\widehat{B} \in \mathfrak{B}\), such that \(\widehat{B}\) contains all points in \(E^\star\),
    \[(E^\star \cap A) \subseteq \widehat{B},\]
    and \(\widehat{B}\) has radius at most twice the largest axis length of \(E^\star\).  

    For the sake of analysis, for a candidate ball \(B \in \mathfrak{B}\), rescale the ball and the points in the ball so that \(\mathrm{radius}(B) = 2 \beta\).  For \(\widehat{B}\) this ensures that, in the rescaled space, the maximum axis length of \(E^\star\) is at least \(\beta\), and therefore the minimum axis length of \(E^\star\) is at least \(1\).  That is, \(S^{-1} \preceq I\).  

    For all points \(a_i \in A \cap B\), identify \(a_i \in \mathbb{R}^d\) with the lifted point \(\widetilde{a}_i \in \mathbb{R}^{d + 1}\), 
    \[\wta_i = (a_i~; 1).\]
    Starting with \(\widetilde{A} = \{\wta_i : a_i \in A \cap B\}\) as the included points, the algorithm will do 
    \[J = \lceil c \cdot (\alpha n/\gamma d)\rceil \le 2c \cdot (\alpha n/\gamma d) \] iterations of the following, where \(c\) is the constant from \Cref{lem:outlier-removal-progress}, and the inequality uses (\ref{eq:alpha-n-big}).
    \begin{enumerate}[(a)]
        \item Solve the dual SDP on the included \(\widetilde{a}_i\) to get weights \(0 \le w_i \le 1 \), \(\sum_i w_i = d + 1\) (\Cref{prop:dual-MVE}).
        \item For each included \(i\) independently, discard \(\widetilde{a}_i\) from the included points with probability \(w_i\).  
    \end{enumerate}
    By \Cref{lem:outlier-removal-progress} we have that with probability \(\ge 0.99\), there exists some iteration \(j^\star\) of this procedure on which less than \(\gamma (d+1) \) mass was placed on outliers in the dual solution.  On such an iteration \(j^\star\), \anote{12/18: in the following, is it in $d+1$ dimensions or $d$ dimensions? Also if in $d$ dimensions, should there be a center, or is it actually origin centric?} \vnote{fixed}
    let \(E_{(j^\star)} = \{x \in \mathbb{R}^{d} : (x - c)^\top M_{(j^\star)}^{-1} (x-c) \le 1\} \) be the minimum volume ellipsoid that encloses all of the \(a_i\) that are included on iteration \(j^\star\).
    By \Cref{lem:vol-approx}, we have that 
    \[\log \det (M_{(j^\star)}) \le \log \det (S) + \gamma (d+1) \log (16\beta^2).\]
    
    On iterations subsequent to \(j^\star\), we only remove points \(\widetilde{a}_i\) from being included.  Thus, after the last iteration the dual SDP value is only lower.  
    For an ellipsoid \(E = \{x \in \mathbb{R}^d : (x-c)^\top M^{-1} (x - c) \le 1 \}\), the volume of \(E\) is \(\vol(E) = \kappa_d \det(M)^{1/2}\), where \(\kappa_d\) is the volume of a unit ball in \(d\) dimensions.  Thus
    \[\log \vol (\widehat{E}) \le \log \vol (E^\star) + \gamma (d+1) \log (16\beta^2) .\]

    By \Cref{lem:dont-remove-too-many-points}, we have that with probability \(\ge 0.99\), we remove at most \(c_2 \cdot \frac{\alpha n}{\gamma}\) points \(\wta_i\) over all iterations, for a constant \(c_2\). \(\widehat{B}\) started with at least \((1 - \alpha)n\) points \(a_i\).  
    Thus, with probability \(\ge 0.98\), \(\widehat{E}\) simultaneously achieves approximate coverage
    \[|\widehat{E} \cap A | \ge \left(1 - O\left( \frac{\alpha}{\gamma} \right)\right)n, \]
    and \(\widehat{E}\) approximates \(E^\star\) in volume:
    \[\vol(\widehat{E}) \le \vol(E^\star) \cdot ( 4 \beta )^{2 \gamma (d + 1)}.\]
    
    Finally, when we aggregate the candidate ellipsoids from each of the candidate \(B \in \mathfrak{B}\), conditioned on \(\widehat{E}\) achieving the desired coverage, we will have that the output ellipsoid will achieve coverage and have volume at most that of \(\widehat{E}\).  Thus the output of the algorithm will achieve coverage and volume approximation if this procedure succeeded on at least one of the \(O(\log n)\) trials, which occurs with probability at least 
    \[1 - (0.02)^{O(\log n)} = 1 - o(1).\]

    We analyze the runtime of the algorithm.  There are at most \(n^2\) bounding balls in \(\mathfrak{B}\).  For each ball, checking which points lie within it, and doing the embedding procedure can be done in \(\poly(n, d)\) time.  Then, we solve the dual SDP \(J = O(\frac{\alpha n}{\gamma d})\) times.  Since \(\gamma \ge 1/n\) (see opening paragraph of proof), the number of iterations is \(\poly(n)\).  We solve the relevant SDPs in \(T_{\mathrm{SDP}}\) time.  The outlier removal procedure takes \(O(n)\) time per iteration.  Finally, we solve one last SDP per iteration, and compare \(O(n^2)\) candidate solutions.  Thus, in total, the runtime of the algorithm is \(\poly(n, d) \cdot T_{\mathrm{SDP}}\).
\end{proof}



\section{Algorithm for Subspace Recovery} \label{sec:subspacerecovery}

In this section, we will design an algorithm for the {\em robust subspace recovery} problem using our algorithm for finding the minimum volume ellipsoid with outliers. 
\anote{Can move some of this to the intro: Mention the high-level idea of problem. Relation to PCA. And \cite{HardtMoitra13, BCPV} and also some of the SoS based works, in particular \cite{bakshi2021list}. Also mention (worst-case) hardness in \cite{HardtMoitra13}. }

\paragraph{Robust subspace recovery.} Formally, we are given $n$ points $a_1. \dots, a_n \in \R^d$; without loss of generality we assume they have unit length i.e., $\norm{a_i}_2 =1$ for all $i \in [n]$. We are also given as input 
a closeness parameter $\eps \in (0,1)$, and the outlier fraction $\alpha \in (0,1)$.  The goal is to find a subspace $S \subset \R^d$ (passing through the origin) of smallest possible dimension that is $\varepsilon$-close in Euclidean distance to at least $1-\alpha$ fraction of the given points.\footnote{Given a subspace $S \subset \R^d $ of dimension $s$ given by an (orthogonal) projection matrix $\Pi_S$ of rank $s$, we will say that a point $a \in \R^d$ is $\varepsilon$-close to subspace $S$ iff $\norm{\Pi^{\perp}_S a}_2 = \norm{(I-\Pi_S) a}_2 \le \varepsilon \norm{a}_2$. } 

Unlike previous works, we address the worst-case version of the problem. In particular, we make no genericity, smoothed analysis, or structural assumptions on either the outliers or the inliers, as is done in \cite{HardtMoitra13, BCPV}. More recent works based on sum-of-squares (SoS) relaxations assume distributional assumptions (e.g., Gaussianity) on the points in the subspace~\cite{bakshi2021list}. As mentioned earlier, the worst-case problem is computationally intractable~\cite{HardtMoitra13}; see also Section~\ref{sec:SSE-hard}. Hence we will aim for approximation guarantees i.e., polynomial time algorithms that is (approximately) competitive in terms of all the parameters: the dimension, fraction of outliers $\alpha$, and the closeness parameter $\eps$. 
Note that in the absence of any structural assumptions on the outliers or inliers, our goal is to find a subspace of small dimension that is close to many points; it may not be possible to recover any ground truth subspace.

\begin{figure}[ht]
\begin{tcolorbox}
\begin{center}
    \textbf{Algorithm for Robust Subspace Recovery} 
\end{center}
\textbf{Input:} a set of points (unit vectors) $a_1, \dots, a_n \in \R^d$, parameter $\gamma>0$, 
    closeness parameter $\eps \in (0,1)$\\
\textbf{Output:} a subspace $S \subset \R^d$ passing through the origin
\anote{Need to edit the algorithm etc.}
\begin{enumerate}
\item Set $\epst \coloneq \eps^{4/\gamma}$.  We randomly perturb the points to get $a'_1, \dots, a'_n$ with $\tilde{a}_i = a_i + \zeta_i$ where $\zeta \sim N(0, \frac{\epst^2}{d} I_d)$. Normalize the points to get points $a'_i = \tilde{a}_i / \norm{\tilde{a}_i}_2$ for each $i \in [n]$. 

\item Run Algorithm in Figure~\ref{fig:ellipsoid_tcolor} (from Theorem~\ref{thm:main}) on the points $a'_1, \dots, a'_n$ to find an ellipsoid $\widehat{E}=\{x: x^\top (\widehat{M})^{-1} x \le 1\}$ passing through the origin satisfying the guarantees of Theorem~\ref{thm:main}. \anote{Can directly reference an algorithm for origin-centered version, if there is one.} (Note that to find an origin-centered ellipsoid, we can apply Algorithm in Figure~\ref{fig:ellipsoid_tcolor} to the $2n$ points $\pm a'_1, \pm a'_2, \dots, \pm a'_n$.)   
\item 
Let $\widehat{S}$ be the top singular space of $\hat{M}$ spanned by all the singular vectors with corresponding singular value $\ge \eps^2$. Return $\widehat{S}$. 
\end{enumerate}
\end{tcolorbox}
\caption{Algorithm for finding a subspace of dimension at most $(1-\gamma)d$ that contains at least $1-\alpha'$ fraction of points}
\label{fig:subspacerecovery}
\end{figure}

The above algorithm runs our robust minimum volume enclosing ellipsoid algorithm on a small random perturbation applied to our given points. The random perturbation is performed by our algorithm, and it is applied to all the points.  We remark that this is different from the smoothed analysis setting where the algorithm is given randomly perturbed points as input, where the points in the subspace are assumed to be randomly perturbed in a way that keeps them inside the (unknown) ground-truth subspace. In contrast, our algorithm applies to an arbitrary set of points given as input.

\begin{theorem}\label{thm:subspacerecovery}
Let $\alpha \in (0,1)$, and $\gamma \in (0,1)$, $\epsilon \in (0,1)$ be input parameters. 
There exists universal constants $c_1, c_2>0, C>1$ such that the following holds for  $\epst \leq \eps^{4/\gamma}$. Suppose we are given as input $n \ge C d \log (d/\epst) + \gamma d / \alpha$  unit vectors $A=\{a_1, \dots, a_n\} \in \R^d$, with a subspace $S^\star$ of dimension $(1-8\gamma) d$ that is $\epst$ close to at least $(1-\alpha)n$ of the given points. Then the algorithm in Figure~\ref{fig:subspacerecovery} runs in polynomial time in $n,d$ and finds w.h.p. a subspace $S \subset \R^d$ of dimension at most $(1-\gamma)d$ such that at least $(1-\alpha')n$ points are $\eps$ close to $S$, for $\alpha'= c_2 \alpha/ \gamma$. 
\end{theorem}
\anote{12/18: Added} We remark that the lower bound condition on the number of samples $n$ is essentially without loss of generality since we can just resample (or take multiple copies of the dataset) and run the same algorithm to get a similar guarantee; this dependence will feature in the running time.  
When we know that there are many points that lie exactly on the subspace $S$, we have $\epst=0$. So we can pick $\eps$ to be whatever desired accuracy (that can be arbitrarily close to $0$) and since $\epst= 2^{-\log(1/\eps)/\gamma}$ (note that the number of bits of precision is only $\log(1/\eps)/\gamma$), and apply the above algorithm to find a subspace that is $\eps$-close to the desired number of points. 
\anote{Ensure that the main algorithm uses ellipsoid or related techniques to get $\poly(n,L)$ time with $L$ bits of precision.}

Before we proceed to the proof, we first need the following claim that uses the random perturbations to the points to prove that any ellipsoid that encloses a significant fraction of the points needs to sufficiently fat.

\begin{lemma}\label{lem:fatness:random}
In the notation of Algorithm~\ref{fig:subspacerecovery}, there exists universal constants $C,c,c' >0$ such that when $n \ge Cd \log(d/\epst)$ we have with probability at least $1-\exp(-c' n) - n\exp(-2d)$ (over the random perturbation of the points) that any origin-centered ellipsoid $E=\{x: x^\top M^{-1} x \le 1\}$ that encloses at least $4n/5$ of the points $a'_1, \dots, a'_n$ has $\lambda_{\min}(M) \ge c (\epst)^2/d$. 
\end{lemma}

This lemma follows directly from the following anti-concentration claim that shows that in every direction most points are far from the origin. 
\begin{claim}\label{claim:fatness:helper}
In the notation of Lemma~\ref{lem:fatness:random} for some universal constant $c,c'>0$, we have with probability $1-\exp(-c'n) - n\exp(-2d)$, 
\begin{equation} \label{eq:fatness:helper}
\forall u \in \R^d \text{ s.t } \norm{u}=1, ~ \Big|\Big\{ i \in [n]: \iprod{a'_i, u}^2 \le \frac{ \epst^2}{256d}\Big\} \Big| \le 0.2 n. 
\end{equation}
\end{claim}
\noindent We defer the proof of Claim~\ref{claim:fatness:helper} to later. The choice of the constant $0.2$ is arbitrary, and can be chosen as desired.  
\begin{proof}[Proof of Lemma~\ref{lem:fatness:random}]
    We condition on the event \eqref{eq:fatness:helper} in Claim~\ref{claim:fatness:helper}. 
    Consider any ellipsoid $E=\{x: x^\top M^{-1} x \le 1\}$ containing $T \subseteq \{a'_1, \dots, a'_n\}$ with $|T|\ge 4n/5$. Let $\lambda$ be the smallest eigenvalue of $M$, and let $u$ be a corresponding eigenvector of unit length. Then 
    $$ \forall i \in T, ~ 1 \ge (a'_i)^\top M^{-1} a'_i \ge \frac{\iprod{a'_i, u}^2}{\lambda}~~ \implies \forall i \in T, ~ \iprod{a'_i, u}^2 \le \lambda. $$   
    From Claim~\ref{claim:fatness:helper}, we conclude that $\lambda \ge \frac{\epst^2}{256d}$.  
\end{proof}

We now proceed to the proof of the main theorem of the section. 
\begin{proof}[Proof of Theorem~\ref{thm:subspacerecovery}]
Set $\gammast=8\gamma$, $\epst=\eps^{4/\gamma}$. 
Suppose there exists a subspace $S^\star$ (passing through the origin) of dimension $(1-\gammast)d$ that is $\epst$-close to $(1-\alpha)n$ points in $A$.  Let $\Pi \in \R^{d\times d}$ be the projection matrix onto the subspace $S^\star$ and $\Pi^{\perp} = I-\Pi$ be the projection onto the subspace orthogonal to $S^\star$.  

\noindent We first show that there is a well-conditioned ellipsoid $E^\star$ of small volume containing the points that are $\epst$-close to $S^\star$. Consider the ellipsoid $E^\star$ given by
\begin{equation}\label{eq:rsubspace:ellip}
E^\star=\Big\{ x \in \R^d: \frac{x^{\top} \Pi x}{2}+ \frac{x^{\top} \Pi^{\perp} x}{2\epst^2} \le 1 \Big\}.
\end{equation}
We have $\norm{a_i}_2=1$ for all points $i \in [n]$. If $a_i$ is $\epst$-close to $S^\star$, then 
\begin{align*}
\norm{\Pi^{\perp} a_i}_2  \le \epst~~
&\implies ~ \frac{a_i^\top \Pi a_i}{2} + \frac{a_i^\top \Pi^{\perp} a_i}{2\epst^2} \le 1. \\
&\implies ~ a_i \in E^{\star}. 
\end{align*}
Moreover,  the ellipsoid $E^\star$ has condition number $\beta=1/\epst$, and its volume is 
$$\vol(E^\star) = \kappa_d \cdot 2^{(1-\gammast)d/2}(2\epst^2)^{\gammast d/2}  = \kappa_d \cdot  2^{d/2} \epst^{\gammast d},$$ 
where $\kappa_d$ is the volume of a unit radius ball in $d$ dimensions. 

By running the algorithm from Theorem~\ref{thm:main} on these points, we get an origin-centered ellipsoid \anote{double check it is justified/referenced correctly} $\widehat{E} = \{x: x^\top \widehat{M}^{-1} x \le 1\}$. Let $T=\{i \in [n]: a'_i \in \widehat{E}\}$; then $|T| \ge (1-c' \alpha/\gamma)n$, and  
$$\vol(\widehat{E}) \le \beta^{\gamma d} \vol(E^\star) \le \kappa_d \cdot 2^{d/2} \epst^{(\gammast  -\gamma)d } \le \kappa_d \cdot 2^{d/2} \epst^{\frac{7}{8}\gammast d }, $$
since $\gammast = 8\gamma$. 

Let $\widehat{S}$ be the subspace spanned by the eigenvalues of $\widehat{M}$ larger than $\eps$, and let $\widehat{\Pi}$ be the projection matrix onto the subspace $\widehat{S}$, and $\widehat{\Pi}^\perp$ be the projection onto the subspace orthogonal to $\widehat{S}$. Then 
\begin{align*}
    \forall i \in T,~&~ 1 \ge \norm{\widehat{M}^{-1/2} a'_i}_2^2 \ge \frac{\norm{\widehat{\Pi}^{\perp} a'_i}_2^2}{\eps^2} ~\implies~ 
    \norm{\widehat{\Pi}^{\perp} a'_i}_2^2 \le \eps^2.
\end{align*}
Hence every point $a'_i$ for each $i \in T$ is $\eps$-close to $\widehat{S}$, as required. By triangle inequality, along with standard length concentration of Gaussians, we have that $a_i$ is $\eps+ O(\epst) \le 2 \eps$-close to $\widehat{S}$.

It remains to lower bound the co-dimension of $\widehat{S}$. Let the codimension $\hat{k} \coloneq d - \dim(\widehat{S})$.  Let $\widehat{M}$ have eigenvalues $\hat{\lambda}_1 \ge \hat{\lambda}_2 \ge  \dots \ge  \hat{\lambda}_d \ge 0$; then  $\hat{\lambda}_{d-\hat{k}} \ge \eps^2$. From Lemma~\ref{lem:fatness:random}, we have $\hat{\lambda}_d \ge c_2 \epst^2/d$ w.h.p.. Conditioning on this, we have  
\begin{align}
\kappa_d\cdot 2^{d/2} \epst^{\frac{7}{8}\gammast \cdot d}&\ge \vol(\widehat{E}) = \kappa_d \prod_{i =1}^d \hat{\lambda}_i \nonumber \\
2^{d/2} \epst^{\frac{7}{8}\gammast \cdot d}& \ge  \prod_{i \le d-\hat{k}} \hat{\lambda}_i \cdot \prod_{i: i \ge d-\hat{k}+1} \hat{\lambda}_i \ge (\eps^2)^{d-\hat{k}} (\hat{\lambda}_d)^{\hat{k}}\nonumber \\
&\ge \eps^{2d} \cdot \Big(\frac{c_2 \epst^2}{d}\Big)^{\hat{k}} \ge \epst^{\gamma d} \cdot \Big(\frac{c_2 \epst^2}{d}\Big)^{\hat{k}}, \label{eq:rsr:help1}
\end{align}
where the last inequality is from $\epst \leq \eps^{4/\gamma}$. 

Since $\epst \leq \eps^{4/\gamma}$ and $\gammast = 8\gamma$, we have $\epst^{\gammast/4} = \epst^{2\gamma} \leq \eps^{8}$. Thus, by picking closeness parameter $\eps \leq 2^{-1/8}$, we have 
$$2^{d/2} \epst^{\frac{7}{8}\gammast \cdot d} \leq \left(2 \epst^{\gammast/4}\right)^{d/2} \epst^{\frac{3}{4} \gammast d} \leq  (2\eps^{8})^{d/2} \epst^{\frac{3}{4} \gammast d}
\leq \epst^{\frac{3}{4} \gammast d}.$$
Putting this together with \eqref{eq:rsr:help1}, we get 
\begin{align*}
\epst^{\frac{3}{4} \gammast d} &\ge \epst^{\gamma d} \cdot \Big(\frac{c_2 \epst^2}{d}\Big)^{\hat{k}} \ge \epst^{\frac{1}{8}\gammast d} \cdot \Big(\frac{c_2 \epst^2}{d}\Big)^{\hat{k}}\\
\Big(\frac{c_2 \epst^2}{d}\Big)^{\hat{k}} & \le \epst^{\frac{5}{8} \gammast d} ~~ \implies ~ \hat{k} \ge \frac{5}{8} \gammast d  \cdot \Big(\frac{\log(1/\epst)}{2\log(1/\epst)+ \log(d/c_2)}\Big) \geq \gamma d,
\end{align*}
where the last inequality is due to $\log(1/\epst) \geq \log(1/\eps^{4/\gamma}) \geq \log(d/c_2)$ by picking $\eps$ sufficiently small.
This lower bounds the codimension of $\widehat{S}$. This completes the proof. 
\end{proof}

We now complete the proof of the helper claim. 
\begin{proof}[Proof of Claim~\ref{claim:fatness:helper}]
This follows from a relatively standard net-based argument. Set $\eta \coloneqq \frac{\epst}{32\sqrt{d}}$, and consider an $\eta$-net $\calN_\eta$ of unit vectors in $\R^d$; $|\calN_\eta| \le \big(\frac{2+\eta}{\eta} \big)^d$. For a fixed unit vector $u \in \calN_\eta$ and a fixed $i \in [n]$, the small ball probability (or anticoncentration) of the Gaussian $\zeta \sim N(0,\frac{\eps_1^2}{d} I))$, 
\begin{align*} 
\Pr\Big[ |\iprod{a'_i, u}| \le \frac{\eps_1}{8\sqrt{d}} \Big] &= \Pr_{g \sim N(0,1)}\Big[ \big|\iprod{a_i, u}+ \frac{\eps_1}{\sqrt{d}} g\big| \le \frac{\eps_1}{8\sqrt{d}} \Big]   \nonumber\\
&\le \sup_{t \in \R} \Pr_{g \sim N(0,1)}\Big[| g+t| \le \frac{1}{8} \Big] \le 0.1 \nonumber. \\  
\text{For a fixed } u \in \calN_\eta,& ~ \Pr\Big[ \Big|\Big\{i \in [n]: |\iprod{a'_i, u}| \le \frac{\eps_1}{8\sqrt{d}} \Big\} \Big| \ge 0.2 n\big| \Big] \le \exp(-c_1 n).
\end{align*}

 By doing a union bound over all $u \in \calN_\eta$, we have that since $n \ge \frac{2}{c_1} \log(\calN_\eta)$, we have  
\begin{align}
\Pr\Big[ \sup_{u \in \calN_\eta}\Big|\Big\{i \in [n]: |\iprod{a'_i, u}| \le \frac{\eps_1}{8\sqrt{d}} \Big\} \Big| \ge 0.2 n\big| \Big] \le |\calN_\eta| \cdot \exp(-c_1 n) \le \exp(-\tfrac{1}{2}c_1 n).
\label{eq:unionboundsmooth}
\end{align}

Let us condition on the above event in \eqref{eq:unionboundsmooth}. For any unit vector $v \in \R^d$, there exists $u^* \in \calN_\eta$ with $\norm{v-u^*}_2 \le \eta$. Moreover $\norm{a'_i} \le \norm{a_i}+ \norm{\zeta_i}$ where $\zeta_i \sim N(0,\frac{\epst^2}{d}I)$; so with probability at least $1-n \exp(-2d)$, we have $\forall i \in [n], ~\norm{a'_i} \le 1+ 2\eps_1 \le 2$. Hence $\forall i \in [n], ~|\iprod{a'_i, v} - \iprod{a'_i, u^*}|\le 2 \eta \le \frac{\eps_1}{16\sqrt{d}}$. Thus we conclude that with high probability
\begin{align}
\Pr\Big[ \sup_{v \in \R^d: \norm{v} =1}\Big|\Big\{i \in [n]: |\iprod{a'_i, v}| \le \frac{\eps_1}{16\sqrt{d}} \Big\} \Big| \ge 0.2 n\big| \Big] \le  \exp(-\tfrac{1}{2}c_1 n).
\nonumber
\end{align}


\end{proof}


\section{Computational Intractability}\label{sec:SSE-hard}
In this section, we provide evidence of the computational intractability of the minimum volume confidence ellipsoid problem. Specifically, under the Small Set Expansion (SSE) hypothesis of~\cite{raghavendra2010graph}, we show that this problem is hard to approximate. The SSE hypothesis is closely related to the Unique Games Conjecture~\citep{khot2003UGC}.

\begin{conjecture}[SSE hypothesis of~\cite{raghavendra2010graph}] \label{conj:sse} For any $\varepsilon>0$, there is a constant $\delta\in (0,1)$ such that there is no polynomial time algorithm to distinguish between the following two cases given a graph $G=(V,E)$ on $n$ vertices with degree $\Delta$:
\begin{itemize}
\item YES: Some subset $S\subseteq V$ with $|S| = \delta n$ satisfies that the induced subgraph on $S$ is dense i.e., the number of edges going out of $S$ is $|E(S,V\setminus S)| \le \varepsilon \Delta |S|$ edges.  
\item NO: Any set $S\subseteq V$ with $ \tfrac{1}{2}\varepsilon \delta n \le |S| \le 2\delta n$ has most of the edges incident on it going outside i.e., $|E(S,V\setminus S)| \ge (1-\varepsilon)  \Delta|S|$.
\end{itemize}
\end{conjecture}
Define the edge expansion of a $\Delta$-regular graph $G$ at scale $\delta$ by 
$$
\phi_G(\delta)\ := \min_{S: |S| \leq \delta n} \frac{|E(S, V\setminus S)|}{\Delta|S|}.
$$
Then, the SSE hypothesis asserts that it is computationally hard to distinguish between: (1) YES case, where $\phi_G(\delta) \leq \varepsilon$, and (2) NO case, where $\phi_G(\delta') \geq 1- \varepsilon$ for all $\delta'\in[\delta\varepsilon/2,2\delta]$.

We now show that the minimum volume confidence ellipsoid problem is SSE-hard to approximate. 
In particular, for ellipsoids covering at least a \((1-\alpha)\) fraction of the points, it is SSE-hard to obtain a volume approximation factor better than \(\beta^{\gamma d/2}\) for any fixed \(\gamma<\delta\). 
This hardness remains true even if one allows a small slack in the coverage requirement, replacing \(1-\alpha\) by \(1-(1+\delta/2)\alpha\).

\begin{theorem}[SSE--hardness]\label{thm:SSE-hard}
Fix any $\varepsilon, \delta \in (0,0.1)$ and $\gamma \in (0,\delta)$ and $\alpha \in (0,1-(1-\varepsilon)\delta)$.
For any $C \ge 5/\delta$, given a graph $G$, we can construct in
polynomial time a set $A=\{a_1,\ldots,a_n\} \subseteq \R^d$ with the property:

\begin{itemize}
    \item 
YES: If $\phi_G(\delta)\leq \varepsilon$,
then there exists an $\beta$-conditioned ellipsoid covering $(1-\alpha) n$ points in $A$ with volume at most $K$; and 
    \item 
NO: If
$\phi_G(\delta')\ge 1-\varepsilon$ for all $\delta'\in[\delta\varepsilon/2,2\delta]$,
then every ellipsoid covering $(1-(1+\delta/2)\alpha)n$ points has volume at least ${\beta}^{\gamma d/2}\,K$, 
\end{itemize}
where $\beta = 1/\widetilde\varepsilon$,
$
K\ = \ \kappa_d\,
2^{d/2}\,\widetilde\varepsilon^{(1-\delta) d},\ \widetilde\varepsilon\ =\ d^{-C},
$
and $\kappa_d$ is the volume of the Euclidean unit ball in $\R^d$.

Consequently, unless the Small-Set Expansion conjecture fails, no
polynomial-time algorithm can approximate the volume of the optimal $\beta$-conditioned ellipsoid with coverage $1-\alpha$
within a factor of $\beta^{\gamma d/2}$ for any $\gamma \in (0,\delta)$.
\end{theorem}

To prove the above SSE-hardness result, we first use the reduction from the Small Set Expansion to the robust subspace recovery in Theorem 14 by~\cite{HardtMoitra13}. Their construction is randomized and yields hardness in the regime where one seeks a subspace containing only a $(1-\varepsilon)\delta$ fraction of points. We make two minor modifications. First, we make the construction deterministic. Second, by padding the instance with additional points at the origin, we extend the reduction to the higher-coverage regime $1-\alpha$ for any $\alpha \in (0,1-(1-\varepsilon)\delta)$. The resulting statement is as follows.

\begin{theorem}\label{thm:HM-modified}
Let $\varepsilon, \delta \in (0,0.1)$ and $\alpha \in (0,1-(1-\varepsilon)\delta)$. There is a polynomial-time reduction mapping a
$\Delta$-regular graph $G$ to an instance $U=\{u_1,\ldots,u_n\}\subset\R^{d}$ of unit vectors such that:
\begin{itemize}
\item YES: If $\phi(\delta)\le\varepsilon$, there exists a subspace
$S^\star$ of dimension $\delta d$ containing at least $1-\alpha$ fraction of the points.
\item NO: If $\phi(\delta')\ge 1-\varepsilon$ for every
$\delta'\in[\delta\varepsilon/2,\,2\delta]$, then for every $\delta' \in[\delta\varepsilon/2,\,2\delta]$,  every subspace of dimension
$\delta' d$ contains at most $1-(1+(1-9\varepsilon)\delta)\alpha$ fraction of the points.
\end{itemize}
\end{theorem}

\begin{proof}[Proof of Theorem~\ref{thm:HM-modified}]
We construct the instance $U$ similar to the reduction by~\cite{HardtMoitra13} as follows. Let $G(V,E)$ be a $\Delta$-regular graph on $d$ vertices.
For each edge $e = (i,j) \in E$, let $u'_e = (e_i + e_j)/ \sqrt{2}$ be the unit length vector corresponding to $e$. 
Let $U'=\{u'_1,\ldots,u'_m\}\subset\mathbb{R}^{d}$ be the instance constructed from $G$
(with $d=|V|$ and $m=|E|=\Delta d/2$). Fix any parameter $\eta \ge 0$ which is specified later and set $T = \eta m / \delta$. Let $U$ be the set of points in $U'$ and $T$ points located at the origin. Then, $U$ contains $|U| = m + T$ points. 

\paragraph{Completeness.}
Assume $\phi_G(\delta)\le\varepsilon$. Then, by Theorem 14 in \cite{HardtMoitra13}, there exists a
$\delta d$-dimensional subspace $S^\star$ containing at least $(1-\varepsilon)\delta m$ points of $U'$.
Moreover, $S^\star$ is a coordinate subspace spanned by a subset of the standard basis vectors.
Thus, all $T$ appended points at the origin lie in $S^\star$ as well. Therefore, we have for our new instance $U$ 
\[
\frac{|U \cap S^\star|}{|U|}
\ \ge\ \frac{T+(1-\varepsilon)\delta m}{m+T}
\ =\ 1-\frac{m-(1-\varepsilon)\delta m}{m+T}
\ \ge\ 1-\frac{\delta-(1-\varepsilon)\delta^2}{\delta+\eta},
\]
where the last inequality is from $T = \eta m/\delta$. For any $\alpha \in (0,1 - (1-\varepsilon)\delta)$, we choose the number of appended points $T = \eta m /\delta$ such that $\eta$  satisfies $\alpha = \frac{\delta-(1-\varepsilon)\delta^2}{\delta+\eta}$. Then, this subspace $S^\star$ contains at least $1-\alpha$ fraction of the points in $U$.

\paragraph{Soundness.}
Assume $\phi_G(\delta')\ge 1-\varepsilon$ for all $\delta'\in[\delta\varepsilon/2,\,2\delta]$.
Fix any $\delta'\in[\delta\varepsilon/2,\,2\delta]$ and any subspace $S$ of dimension $r = \delta'd$. We use the following claim to bound the number of points in $U$ covered by this subspace $S$.

\begin{claim}\label{clm: dim-slack}
For every integer $r$ with  $\frac{\delta\varepsilon}{2}\,d \le r \le 2\delta d$, every set
$P\subseteq U'$ that lies in some subspace of dimension at most $r$ satisfies
\[
\frac{|P|}{m}\ \le\ 4\varepsilon\,\frac{r}{d}.
\]
Equivalently, any $r$-dimensional subspace contains at most a $4\varepsilon\cdot(r/d)$ fraction of the points in $U'$.
\end{claim}

By Claim~\ref{clm: dim-slack},
\[
|U' \cap S|\ \le\ 4\varepsilon\,\frac{r}{d}\,m\ =\ 4\varepsilon\delta'\,m\ \le\ 8\varepsilon\delta\,m.
\]
Hence
\[
\frac{|U \cap S|}{|U|}
\ \le\ \frac{T+8\varepsilon\delta m}{m+T}
\ =\ 1-\frac{m-8\varepsilon\delta m}{m+T}
\ \le\ 1-\frac{\delta-8\varepsilon\delta^2}{\delta+\eta}.
\]
Since $\varepsilon \in (0,0.1)$, we have $1-\varepsilon \geq 8\varepsilon$. Then, we have 
$$
\frac{|U \cap S|}{|U|} \leq 1-\frac{\delta-8\varepsilon\delta^2}{\delta+\eta} = 1-\alpha -\frac{(1-9\varepsilon)\delta^2}{\delta + \eta} \leq 1-\alpha - (1-9\varepsilon)\delta \alpha,
$$
where the last inequality is from $\alpha \leq \delta/(\delta+\eta)$.
Thus, the subspace $S$ contains at most $1- (1+(1-9\varepsilon)\delta)\alpha$ fraction of points in $U$.
\end{proof}

\begin{proof}[Proof of Claim~\ref{clm: dim-slack}]
Let $P\subseteq U$ be contained in a subspace of dimension at most $r$, let $F\subseteq E$
be the corresponding edge set, and let $S:=N_B(F)\subseteq V$ be the neighborhood of $F$ in the
bipartite incidence graph $B=(E,V)$. 

We require the following claim, corresponding to Claim 3.4 in \cite{HardtMoitra13}. As their argument relies on a randomized construction, we include a proof using our deterministic construction in Appendix~\ref{apx:SSE}.

\begin{claim}\label{clm:span}
    For every set of points $P \subseteq U$ corresponding to a set of edges $F \subseteq E$, we have
    $$
    \frac{|N_B(F)|}{2}\ \leq\ \dim(\mathrm{span}\,P)\ \le\ |N_B(F)|.
    $$
\end{claim}

By Claim~\ref{clm:span}, we have
\begin{equation}\label{eq:span-vs-neighborhood}
\frac{|S|}{2}\ \leq\ \dim(\mathrm{span}\,P)\ \le\ |S|.
\end{equation}
Then, we have $r\leq |S| \leq 2 r$.

We consider two cases.

\noindent\emph{Case 1: $|S|\le 2\delta d$.}
Since $|S| \geq r \geq \delta \varepsilon d$, let $\delta'=|S|/d\in[\delta\varepsilon/2,2\delta]$.
By the small-set expansion hypothesis,
the number of edges leaving $S$ is at least $|E(S,V\setminus S)| \geq (1-\varepsilon)\Delta|S|$.
Since $\Delta|S|=2|E(S,S)|+|E(S,V\!\setminus\!S)|$, we obtain
\[
|E(S,S)|\ \le\ \frac{\varepsilon\Delta}{2}\,|S|.
\]
As $F\subseteq E(S,S)$ and $|S|\le 2r$, we get
\[
|P|=|F|\ \le\ \frac{\varepsilon\Delta}{2}\,|S|\ \le\ \varepsilon\Delta r.
\]
Dividing by $m=\Delta d/2$ yields $|P|/m\le 2\varepsilon\,(r/d)$.

\noindent\emph{Case 2: $|S|>2\delta d$.}
Set $s:=2\delta d$ and let $S'\subseteq S$ be a size-$s$ subset maximizing $|E(S',S')|$.
By averaging (equivalently, taking expectation over a uniform size-$s$ subset of $S$),
\begin{equation}\label{eq:subset-dense}
|E(S',S')|\ \ge\ \frac{\binom{s}{2}}{\binom{|S|}{2}}\ |E(S,S)|
\ \ge\ \Big(\frac{s}{|S|}\Big)^{\!2}|E(S,S)|
\ \ge\ \Big(\frac{s}{|S|}\Big)^{\!2}|F|.
\end{equation}
Applying the expansion hypothesis to $S'$ (note $|S'|=2\delta d$) gives
\[
|E(S',S')|\ \le\ \frac{\varepsilon\Delta}{2}\,|S'|
\ =\ \varepsilon\Delta\,\delta d.
\]
Combining with \eqref{eq:subset-dense} and using $|S|\le 2r$ from \eqref{eq:span-vs-neighborhood},
\[
|P|=|F|\ \le\ \Big(\frac{|S|}{s}\Big)^{\!2}|E(S',S')|
\ \le\ \Big(\frac{2r}{2\delta n}\Big)^{\!2}\cdot \varepsilon\Delta\,\delta d
\ =\ \varepsilon\Delta\,\frac{r^2}{\delta n}.
\]
Hence
\[
\frac{|P|}{m}\ \le\ \frac{\varepsilon\Delta\,r^2/(\delta d)}{\Delta d/2}
\ =\ 2\varepsilon\,\frac{r^2}{\delta d^2}
\ \le\ 4\varepsilon\,\frac{r}{d},
\]
where the last step uses $r/d\le 2\delta$ since $r\le 2\delta d$.

Combining both cases yields the claimed bound $|P|/m\le 4\varepsilon\,(r/d)$.
\end{proof}

\begin{proof} [Proof of Theorem~\ref{thm:SSE-hard}]
We construct a set of points $A$ as follows.
Given a Small Set Expansion instance $G(V,E)$, let $U=\{u_i\}_{i=1}^n$ be the set of points constructed by Theorem~\ref{thm:HM-modified} with
parameters $(\varepsilon,\delta)$. 
We then add random perturbations to these points and normalize them to unit vectors. 
Independently sample $z_i\sim
\mathcal{N}(0,\widetilde\varepsilon^2 I_d/d)$ with $\widetilde\varepsilon=d^{-C}$ for
a fixed constant $C\ge 10$, and set $a_i=(u_i+z_i)/\|u_i+z_i\|$. We then consider the set of points $A=\{a_i\}$ and target coverage
$1-\alpha$ for $\alpha \in (0,1-(1-\varepsilon)\delta)$.

\paragraph{Completeness.}
If $\phi_G(\delta)\le\varepsilon$, then by Theorem~\ref{thm:HM-modified}, there exists a subspace
$S^\star\subset\R^{d}$ of dimension $\delta d$ containing at least $(1-\alpha)n$ points of $U$.

Let $\Pi$ be the projection to this subspace $S^*$ of dimension $\delta d$. Then, for any
$\widetilde\varepsilon\in(0,1)$ define the ellipsoid 
\begin{equation}\label{eq:star-ellipsoid}
E^\star \ :=\ \Big\{x\in\R^{d}:\ \frac{x^\top \Pi x}{2}
+\frac{x^\top \Pi^\perp x}{2(\widetilde\varepsilon)^2}\le 1\Big\}.
\end{equation}
If a unit vector $a$ is $\widetilde\varepsilon$-close to $S$ then $a\in E^\star$; and
\begin{equation}\label{eq:vol-star}
\vol\!\left(E^\star \right)=\kappa_d\,
2^{\delta d/2}\,(2\widetilde\varepsilon^2)^{(1-\delta) d/2} = \kappa_d\,
2^{d/2}\,\widetilde\varepsilon^{(1-\delta) d},
\end{equation}
where $\kappa_d$ is the volume of the Euclidean unit ball in $\R^{d}$.

For any points $u_i \in  S$, and by a Gaussian tail,
$\|z_i\|_2\le \widetilde\varepsilon$ w.h.p., hence $\dist(a_i, S)\le \widetilde\varepsilon$.
Thus, we have all such $a_i$ lie in the $\beta$-conditioned ellipsoid $E^\star$ with volume $K$.

\paragraph{Soundness.}
We now consider the NO case. Suppose, toward contradiction, that there exists
an ellipsoid $E=\{x:x^\top M^{-1}x\le 1\}$ such that for $\alpha' = (1+\delta/2)\alpha \geq (1+(1-9\varepsilon)\delta)\alpha - 40\delta d/ n$
\begin{equation}\label{eq:assumed-small}
|E\cap A|\ \ge\ (1-\alpha') n \qquad\text{and}\qquad
\vol(E)\ \le\ \beta^{\gamma d/2}\,K.
\end{equation}

Let $E^*$ be a minimum-volume ellipsoid in $\R^{d}$ covering at least $(1-\alpha')$ fraction of the points in $A$, and let $M^*$ be its shape matrix. Then
\[
\vol(E^*)\ \le\ \vol(E)
\quad\Longrightarrow\quad
\det(M^*)\ 
\ \le\ \Big(\frac{\vol(E)}{\kappa_d}\Big)^2.
\]
Since $\vol(E) \leq \beta^{\gamma d/2} K$, we have
\begin{equation}\label{eq:det-upper-D}
\det(M^*)\ \le\ \beta^{\gamma d}\,2^{d}\,
\widetilde\varepsilon^{2(1-\delta) d}.
\end{equation}

We now bound the small eigenvalue directions in $M^*$.
Let the eigenvalues of $M^*$ be $\lambda_1\ge\cdots\ge \lambda_d>0$.
Fix a constant threshold $\tau:=\varepsilon_0^2$ with $\varepsilon_0\in(0,0.01)$, and let
$\widehat{T}\subset\R^{d}$ be the span of eigenvectors with eigenvalues $\lambda_i\ge \tau$. Let
$\widehat{k}_d:=d-\dim(\widehat{T})$ be the number of small eigenvalues and $\Pi^\perp_{\widehat{T}}$ be the projection to orthogonal directions of $\widehat{T}$.
For any point $a_i$ contained in $E^*$, we have $1\ge a_i^\top (M^*)^{-1} a_i\ge \|\Pi^\perp_{\widehat{T}}a_i\|_2^2/\tau$. Thus, 
every covered point is $\varepsilon_0$-close to the subspace $\widehat{T}$; hence at least $(1-\alpha')n$ of
the points in $A$ satisfy $\dist(\tilde{a}_i,\widehat{T})\le \varepsilon_0$.
By Lemma~\ref{lem:fatness:random}, with high probability
$\lambda_{\min}(M^*)\ge c\widetilde\varepsilon^2/d$ for some constant $c$, so
\begin{equation}\label{eq:prod-bounds-D-unit}
\tau^{\,d-\widehat{k}_d}\Big(\tfrac{c\widetilde \varepsilon^2}{d}\Big)^{\widehat{k}_d}
\ \le\ \det(M^*)
\ \le\ \beta^{\gamma d}\,2^{d}\,
\widetilde\varepsilon^{2(1-\delta) d}.
\end{equation}
Let $Y:=\log d-(\log (c \widetilde \varepsilon^2) -\log\tau) >0$. Taking logs in the first inequality of Equation~\eqref{eq:prod-bounds-D-unit} and rearranging gives
\begin{equation}\label{eq:kD-lb}
\widehat{k}_d\ \ge\ \frac{-\log\det(M^*)+ d \log\tau }{Y}.
\end{equation}
Taking logs in the second inequality of Equation~\eqref{eq:prod-bounds-D-unit} and $\widetilde \varepsilon = d^{-C}$, we have 
\begin{align*}
\log\det(M^*)&\leq \log( 2^{d} \widetilde \varepsilon^{(2-2\delta-\gamma) d})  \leq  d\,\log 2 + \left(2-2\delta-\gamma\right)d \log\widetilde\varepsilon\\
&
= -\left(2-2\delta-\gamma\right)d\ C\log d + O(d).
\end{align*}
Thus, we have $-\log\det(M^*)\ge \left(2-2\delta-\gamma\right) d \cdot C\log d - O(d)$.
We have $Y=\log d - (\log (c \widetilde \varepsilon^2) -\log \tau)=(2C+1)\log d + O(1)$. 
Therefore, by Equation~\eqref{eq:kD-lb}, we have
for sufficiently large $d$,
\begin{equation}\label{eq:kD-final}
\widehat{k}_d\ \ge\ \frac{\left(2-2\delta-\gamma\right)d \cdot C\log d - O(d)}{(2C+1)\log d + O(1)}
\ \ge\ \left(1-\delta-\frac{ \gamma }{2}\right)\frac{C}{C+1}d - o(d).
\end{equation}
Then, we have
\begin{equation}\label{eq:That-dim}
\dim(\widehat{T})\ \leq\ d -\widehat{k}_d \leq \left(\delta+\frac{ \gamma }{2} + \frac{1}{C}\right)d + o(d).
\end{equation}

Since $\gamma \in (0,\delta)$, for sufficiently large $C > 5/\delta$, we have $\dim(\widehat{T}) \leq 7/4\cdot \delta d$. Thus, we have at least $(1-\alpha')$ fraction of the points in $A$ are $\varepsilon_0$ close to this subspace $\widehat{T}$ with dimension less than $7/4\cdot \delta d$.

We now show that for these points $a_i$ that are close to this subspace $\widehat{T}$, a large fraction of the points $u_i$ before random perturbation lie exactly in a subspace with dimension at most $2\delta d$.
Since $a_i = (u_i + z_i)/\|u_i +z_i\|$, we have 
$$
\|a_i - u_i\|\leq \|u_i-(u_i + z_i)\| + \|(u_i + z_i) -a_i\| \leq \|z_i\| + |1-\|u_i+z_i\||\leq 2\|z_i\|.
$$
Thus, we have these $u_i$ are $\varepsilon_0 + 2\widetilde \varepsilon$ close to the subspace $\widehat{T}$.
In Lemma~\ref{lem:exact-contain} below, we show that when many points are close to a subspace in this instance, then there exists a subspace that exactly contains a large portion of these points. 
By Lemma~\ref{lem:exact-contain} and choosing $\theta^2 = 7/8$, we get a subspace with dimension at most $2\delta d$ that contains at least $(1-\alpha')n - 40\delta d$ points in $U$. Let $\alpha'' = \alpha' +40\delta d/ n = (1+(1-9\varepsilon)\delta)\alpha$. 

Thus, we have there is a subspace $\overline{S}\subset\R^d$ with
$\dim(\overline{S})\le 2\delta d$ that contains at least $(1-\alpha'')$ of the points in $A$. This contradicts the NO case in Theorem~\ref{thm:HM-modified}.
Therefore, every ellipsoid covering $(1-\alpha') n$ points has
volume at least $\beta^{\gamma d}K$.
\end{proof}

\begin{lemma}[Exact containment from many near points]\label{lem:exact-contain}
Let $S\subset\R^d$ be an $r$-dimensional subspace and set $P:=P_S$ to be the projection matrix onto the subspace $S$. Fix $\varepsilon\in(0,1/4)$ and define the set of $\varepsilon$-close edges
\[
F\ :=\ \bigl\{\{i,j\}\in E:\ \|(I-P)u_{ij}\|_2\ \le\ \varepsilon\bigr\},
\]
where $u_{ij} = (e_i + e_j)/\sqrt{2}$.
Let $\theta\in(0,1)$ and define the heavy index set $J:=\{i: P_{ii}\ge \theta^2\}$.
Then:
\begin{enumerate}
\item[(a)] $|J|\le r/\theta^2$ (since $\sum_i P_{ii}=\mathrm{tr}(P)=r$).
\item[(b)] Let $t:=\frac{1-\theta^2}{2}-\varepsilon^2 \geq 0$. Then all but at most $(r/t^2)$ edges of $F$ have both endpoints in $J$. Consequently,
\[
\bigl|\{\{i,j\}\in F:\ i,j\in J\}\bigr|\ \ge\ |F|\,-\,\frac{r}{t^2}.
\]
In particular, the coordinate subspace
\(
S_J:=\mathrm{span}\{e_k:\ k\in J\}
\)
has $\dim(S_J)=|J|\le r/\theta^2$ and contains exactly all edges in $F\cap E(J,J)$.
\end{enumerate}
\end{lemma}

\begin{proof}
Part (a) is immediate from $\sum_i P_{ii}=r$.

For (b), take any edge $\{i,j\}\in F$ with $i\notin J$ (so $P_{ii}<\theta^2$). Since
\[
\|(I-P)v_{ij}\|_2^2\ =\ 1-\|Pv_{ij}\|_2^2
\ =\ 1-\frac{1}{2}(P_{ii}+P_{jj}+2P_{ij})\ \le\ \varepsilon^2,
\]
we obtain
\[
P_{ii}+P_{jj}+2P_{ij}\ \ge\ 2(1-\varepsilon^2).
\]
Using $P_{ii}<\theta^2$ and $P_{jj}\le 1$,
\[
2P_{ij}\ \ge\ 2(1-\varepsilon^2)-P_{ii}-P_{jj}\ \ge\ 2(1-\varepsilon^2)-(1+\theta^2),
\]
hence $P_{ij}\ge t:=\frac{1-\theta^2}{2}-\varepsilon^2$. For this fixed $i\notin J$, the number of neighbors $j$ with $P_{ij}\ge t$ is at most
\[
\frac{1}{t^2}\sum_{j=1}^d P_{ij}^2\ =\ \frac{P_{ii}}{t^2},
\]
where the equality is due to $P^2 = P$ for any projection matrix. 
Summing over all $i\notin J$ bounds the number of edges in $F$ with at least one light endpoint:
\[
\bigl|F\setminus E(J,J)\bigr|\ \le\ \sum_{i\notin J}\frac{P_{ii}}{t^2}
\ \le\ \frac{1}{t^2}\sum_{i=1}^d P_{ii}\ =\ \frac{r}{t^2}.
\]
The remaining edges in $F$ lie in $E(J,J)$, and each such vector $v_{ij}$ is supported in $J$, hence lies exactly in $S_J$.
\end{proof}

\section{Application to Convex Confidence Sets and Conformal Prediction}\label{sec:applications}
\newcommand{\Ecal}{\mathcal{E}}
\newcommand{\Ecalbeta}{\mathcal{E}_\beta}

\anote{Statement about the Ellipsoidal confidence seet+ VC dim}

\subsection{Convex confidence sets}
\anote{Convex confidence sets}

Our routine for approximating the optimal confidence ellipsoid of a given aspect ratio \(\beta\), naturally gives a guarantee for approximating the optimal \emph{convex confidence set} of a distribution.  This is because any convex set is well-approximated by an ellipsoid (via John's theorem), and indeed any well-conditioned convex set is well-approximated by a well-conditioned ellipsoid~\cite{john1948extremum}.  

This highlights an interesting aspect of the problem of learning confidence sets.  We know that for a family \(\mathcal{F}\) of confidence sets, \(\mathcal{F}\) having bounded VC dimension is sufficient to ensure that we can statistically efficiently learn an approximate best confidence set in \(\mathcal{F}\).  However, bounded VC dimension is not a necessary condition!  Indeed the family of convex sets (and even the family of well-conditioned convex sets) has infinite VC-dimension, even in \(2\) dimensions.  

Bounded VC-dimension is not necessary to achieve coverage, because it is a qualitatively different goal than achieving low binary classification error.  To achieve low binary classification error, it is important to include the positive examples in the set, and exclude the negative examples from your set.  However, to achieve coverage, it is only important that enough points are included in the set, and there is no penalty for including extra points.  Another way of seeing this is that the convex set that we are trying to approximate may have significantly different coverage on the population than on the empirical sample.  However, it is well-approximated by an ellipsoid, and that ellipsoid must have similar coverage on the population and the sample via uniform convergence.

\begin{definition}[Aspect ratio of a convex body]\label{def:aspect-ratio}
Let $X \subset \R^d$ be a convex body with nonempty interior.
Denote by
\[
r(X)\ :=\ \max_{x \in \R^d}\ \sup\{\, r > 0 : B(x, r) \subseteq X \,\},
\qquad
R(X)\ :=\ \min_{x \in \R^d}\ \inf\{\, R > 0 : X \subseteq B(x, R) \,\},
\]
the \emph{inradius} and \emph{circumradius} of $X$, respectively,
where $B(x, r)$ denotes the Euclidean ball of radius $r$ centered at $x$.
The \emph{aspect ratio} of $X$ is
\[
\kappa(X)\ :=\ \frac{R(X)}{r(X)} \ \ge\ 1.
\]
\end{definition}

\begin{theorem}[Competing with convex confidence sets]\label{thm:convex-via-john}
Let $A=\{a_1,\dots,a_n\}\subset\R^d$. 
Fix parameters $\alpha\in(0,1)$ and $\gamma \in (0,1)$. 
There exists a universal constant $c>0$ and a polynomial-time algorithm that outputs an ellipsoid
$\widehat{E}\subset\R^d$ with the following property:

For the minimum volume convex set $K^\star\subset\R^d$ with $|K^\star\cap A|\ge (1-\alpha)n$, let $\kappa(K^*)$ be the aspect ratio of $K^*$.
Then the output $\widehat{E}$ satisfies
\begin{equation}\label{eq:convex-via-john:vol}
\vol(\widehat{E})^{1/d} \;\le\; t_d^{1+\gamma}\,\kappa(K^*)^{\gamma}\,\vol(K^\star)^{1/d},
\qquad\text{and}\qquad
|\widehat{E}\cap A| \;\ge\; \Big(1-\frac{c\,\alpha}{\gamma}\Big)n,
\end{equation}
where $t_d = \sqrt{d}$ if $K^*$ is centrally symmetric; otherwise $t_d = d$.
\end{theorem}

\begin{proof}
Let $E_J$ be the John ellipsoid of $K^\star$. By John's theorem, $E_J\subseteq K^\star \subseteq t_d\,E_J$
with $t_d=d$ in general and $t_d=\sqrt d$ if $K^\star$ is centrally symmetric. Set $E' = t_d\,E_J$.
Then $E'$ has the same condition number as $E_J$ (scaling does not change condition number), and
$K^\star\subseteq E'$, hence $|E'\cap A|\ge |K^\star\cap A|\ge(1-\alpha)n$.

Apply Theorem~\ref{thm:main} to the inlier witness ellipsoid $E'$. Let $\beta_J$ be the condition number of the ellipsoid $E_J$.  Since $E'$ is $\beta_J$-conditioned and
covers at least $(1-\alpha)n$ points, Theorem~\ref{thm:main} yields an ellipsoid $\widehat{E}$ with
\[
\vol(\widehat{E})^{1/d}\;\le\; \beta_J^{\gamma}\,\vol(E')^{1/d}
\;=\; \beta_J^{\gamma}\,t_d\,\vol(E_J)^{1/d}
\;\le\; \beta_J^{\gamma}\,t_d\,\vol(K^\star)^{1/d},
\]
using $\vol(E_J)\le \vol(K^\star)$ because $E_J\subseteq K^\star$. The inlier bound
$|\widehat{E}\cap A| \ge \big(1-\frac{c\alpha}{\gamma}\big)n$ is inherited directly from
Theorem~\ref{thm:main}. 

Let $\lambda_{\min}$ and $\lambda_{\max}$ be the minimum and maximum eigenvalues of the shape matrix of $E_J$, respectively. Then, $\sqrt{\lambda_{\min}}$ and $\sqrt{\lambda_{\max}}$ are the minimum and maximum axes lengths of $E_J$. Since $E_J \subseteq K^* \subseteq t_d E_J$, we have $R(K^*) \geq \sqrt{\lambda_{\max}}$ and $t_d \sqrt{\lambda_{\min}} \geq r(K^*)$. Thus, we have 
$$
\beta_J = \sqrt{\frac{\lambda_{\max}}{\lambda_{\min}}} \leq t_d \frac{R(K^*)}{r(K^*)} = t_d \kappa(K^*).
$$

\end{proof}

\subsection{Conformal Prediction}
\anote{Conformal prediction theorem}

A natural application of learning confidence sets is to conformal prediction.  In conformal prediction, we want to construct a prediction set that covers an unseen test point with probability at least \(1 - \alpha\), assuming only that the test point and the training (empirical) samples are \emph{exchangeable}.  Our algorithm for learning confidence ellipsoids directly implies an algorithm for conformal prediction as follows.

\begin{theorem}[Conformal Prediction with Approximate Volume Optimality]\label{thm:conformal-ellipsoid}
Fix parameters $\alpha\in(0,1)$, $\gamma \in (0,1)$, 
and a condition number upper bound $\beta\ge 1$.
Let $\Ecal_\beta$ denote the class of all origin-centered ellipsoids in $\R^d$ whose condition number is at most $\beta$.
There is a polynomial-time conformal prediction algorithm that, given $Y_1,\ldots,Y_n\in\R^d$, outputs a prediction set $\widehat{C}\subseteq\R^d$ such that:

\begin{enumerate}[(a)]
\item \textbf{Marginal coverage.} If $Y_1,\ldots,Y_{n+1}$ are exchangeable, then
\[
\Pr\big[Y_{n+1}\in \widehat{C}\big]\ \ge\ 1-\alpha.
\]

\item \textbf{Approximate volume optimality versus $\beta$-conditioned ellipsoids.}
If $Y_1,\ldots,Y_{n+1}$ are i.i.d.\ from an arbitrary distribution $\mathcal{D}$ on $\R^d$, then for any $\delta\in(0,1)$, whenever $n = \Omega(d^2/\gamma^2)$,
the following holds with probability at least $1-\delta$ over the training sample:
\[
\vol(\widehat{C})^{1/d}\ \le\ \beta^{\gamma}\,\vol\!\big(C^\star_{\beta}\big)^{1/d},
\]
where $C^\star_{\beta}$ is the optimal $\beta$-conditioned ellipsoid achieving a slightly stronger coverage constraint, such that for a universal constant $c > 0$
\[
C^\star_{\beta}\ \in\ \arg\min_{C\in \Ecal_\beta}\ \vol(C)
\quad\text{s.t.}\quad
\Pr_{Y\sim\mathcal{D}}[Y\in C]\ \ge\ 1-c\alpha\gamma.
\]
\end{enumerate}
\end{theorem}

\begin{proof}
    Suppose data points $Y_1, \dots, Y_{n+1}$ are drawn i.i.d. from an arbitrary distribution $\mathcal{D}$. By Corollary~\ref{cor:main}, since $n = \Omega(d^2/ \gamma^2)$ and $\gamma < \alpha$, our algorithm  finds a set $\widehat{C}$ such that
    \[
    \vol(\widehat{C})^{1/d}\ \le\ \beta^{\gamma}\,\vol\!\big(C^\star_{\beta}\big)^{1/d},
    \]
    where $C^\star_{\beta}$ is an $\beta$-conditioned ellipsoid satisfies
    \[
    C^\star_{\beta}\ \in\ \arg\min_{C\in \Ecal_\beta}\ \vol(C)
    \quad\text{s.t.}\quad
    \Pr_{Y\sim\mathcal{D}}[Y\in C]\ \ge\ 1-c\alpha\gamma.
    \]
    Moreover, we have 
    $$
    \Pr[Y_{n+1} \in \widehat{C}] = \Pr_{Y\sim\mathcal{D}}[Y\in \widehat{C}] \geq 1- O\left(\frac{c\alpha \gamma}{\gamma} - \sqrt{\frac{d^2}{n}}\right) \geq 1-\alpha.
    $$

   We then construct a conformal prediction by constructing a nested system as in~\cite{gao2025confidence}. This conformal predictor satisfies the margin coverage under exchangeable data.
\end{proof}


\bibliographystyle{alpha}
\bibliography{ref}

\appendix

\section{Computational Hardness of Obtaining Non-trivial Volume Approximation Guarantees} \label{app:nphardness}

In this section, we show evidence of computational intractability for obtaining any non-trivial approximations for the volume of the ellipsoid. In we will show that it is computationally hard to determine whether there exists a $1-\alpha$-fraction of points that lie on a subspace. Note that the volume of the minimum volume ellipsoid containing a set of points $S$ is zero exactly when $S$ belongs to a subspace of dimension at most $d-1$. 

We start with a simple NP-hardness result due to~\cite{khachiyan}. 

\begin{proposition}\label{prop:np-hard:khachiyan}
There exists a constant $\alpha_0>0$, such that for any $\alpha \ge \alpha_0(1-\tfrac{1}{d})+\frac{1}{d}$ given any set of $m$ points in $d$ dimensions, assuming $P \ne NP$, there is no algorithm running in polynomial time (in $m,d$) that determines if there exists an ellipsoid of volume $0$ that contains at least $(1-\alpha) m$ points. 
\end{proposition}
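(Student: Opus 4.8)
The plan is to reduce from the rational \emph{densest-hyperplane} problem -- given $m$ rational points, find a hyperplane through as many as possible -- whose NP-hardness and inapproximability I would import from~\cite{khachiyan}. First, using the reformulation already recorded here (the minimum-volume ellipsoid of a set has volume zero exactly when the set lies in an affine subspace of dimension at most $d-1$), an ellipsoid of volume $0$ containing at least $(1-\alpha)m$ of the points exists if and only if some $(1-\alpha)m$ of them lie on a common affine hyperplane. Lifting each point $p$ to $(p,1)\in\mathbb{R}^{d+1}$ and dualizing the resulting linear matroid, the largest coplanar subset has size $m$ minus the minimum number of linearly dependent columns of an associated rational matrix, i.e.\ $m$ minus the \emph{spark}; computing the spark over $\mathbb{Q}$ -- equivalently the minimum distance of a rational linear code -- is NP-hard, and NP-hard to approximate within a fixed constant. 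So I would take, via this correspondence, constants $0<\alpha_0<\gamma<1$ and a polynomial-time family of instances $P_0$, each a set of $m_0$ rational points in $\mathbb{R}^{d_0}$ with $d_0\to\infty$ and (after routine padding of the base instance) $d_0=o(m_0)$, for which it is NP-hard to distinguish the case that some hyperplane contains $\ge(1-\alpha_0)m_0$ points of $P_0$ from the case that every hyperplane contains $\le(1-\gamma)m_0$ of them. Some growth of $d_0$ is unavoidable, since for fixed $d$ the problem is solvable in $m^{O(d)}$ time by enumerating the hyperplanes determined by $d$-subsets of the points; the dimension is treated as part of the input throughout.

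The second step is a padding argument that slides the threshold fraction precisely onto $1-\bigl(\alpha_0(1-\tfrac{1}{d})+\tfrac{1}{d}\bigr)=(1-\alpha_0)(1-\tfrac{1}{d})$, keeping the dimension $d:=d_0$. I adjoin $J$ further rational points in general position, obtaining $P$ with $m:=m_0+J$ points; since these points are generic, no hyperplane passes through more than $d$ of them, so for every hyperplane $H$ the number of points of $P$ on $H$ is between the number of points of $P_0$ on $H$ and that number plus $d$. As $d=o(m_0)$ is negligible next to the gap $(\gamma-\alpha_0)m_0$, the densest hyperplane of $P$ carries $\ge(1-\alpha_0)m_0$ points in the first (YES) case above and strictly fewer in the second (NO) case. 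Choosing $J=\lceil m_0/(d-1)\rceil$ makes $(1-\alpha_0)m_0$ equal, up to rounding, to $(1-\alpha_0)(1-\tfrac{1}{d})\,m=(1-\alpha)m$ with $\alpha=\alpha_0(1-\tfrac{1}{d})+\tfrac{1}{d}$, and any larger $J$ realizes every larger $\alpha$ in the asserted range; since $d=d_0$ runs over an infinite set, this rules out a polynomial-time algorithm for the whole family. Hence a polynomial-time test of whether $P$ admits a volume-$0$ ellipsoid through $\ge(1-\alpha)m$ of its points would imply $P=NP$, which is the proposition.

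I expect the main obstacle to be the base hardness in exactly the form needed: that the rational densest-hyperplane/spark problem is NP-hard \emph{with a threshold equal to a fixed fraction $1-\alpha_0$ of the point count} and with a constant multiplicative gap, so that a universal constant $\alpha_0$ drops out and the entire range $\alpha\ge\alpha_0(1-\tfrac{1}{d})+\tfrac{1}{d}$ is reached rather than a single threshold. This is precisely the content imported from~\cite{khachiyan}; if only plain NP-hardness at one threshold were available, one would first have to amplify the gap past $d_0$ by replicating each point $d_0+1$ times, which scales the densest-hyperplane value, and the gap, by $d_0+1$. Everything else -- confirming that the generic padding points create no spurious coplanar family and assist no hyperplane by more than $d$ points, and the bookkeeping tying $J$, $m$, and $d$ to the exact expression $\alpha_0(1-\tfrac{1}{d})+\tfrac{1}{d}$ -- is routine.
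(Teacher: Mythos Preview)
Your proposal is correct and takes essentially the same approach as the paper: import the NP-hardness of the densest-hyperplane problem from~\cite{khachiyan}, observe that a volume-zero enclosing ellipsoid exists exactly when the covered points lie on an affine hyperplane, and pad with random/generic points to slide the coverage threshold downward. The paper's own proof is three sentences and quotes the threshold $(1-\alpha_0)(1-1/d)$ directly from Khachiyan's result before padding to reach any larger~$\alpha$, whereas you start from a base threshold $1-\alpha_0$ and manufacture the $(1-1/d)$ factor yourself via the choice $J=\lceil m_0/(d-1)\rceil$; this is more bookkeeping than the paper does but not a different idea. Your matroid-duality/spark aside (largest hyperplane of the lifted matroid equals $m$ minus the minimum circuit of the dual, i.e.\ the spark of a parity-check-type matrix) is correct but unnecessary here, since you ultimately cite Khachiyan for the hardness anyway.
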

\begin{proof}
Khachiyan shows that it is NP-hard to determine if there exists a $d-1$ dimensional subspace containing at least $(1-\alpha_0)(1-1/d)$ fraction of points~\cite{khachiyan}. The volume of the minimum volume ellipsoid containing a set of points $S$ is zero if and only if $S$ belongs to a subspace of dimension at most $d-1$. Moreover, we can add additional random points in $\mathbb{R}^d$ to get the same hardness for any coverage $1-\alpha \le (1-\alpha_0)(1-1/d)$. 
\end{proof}

\section{Proofs in Section~\ref{sec:SSE-hard}}\label{apx:SSE}

\begin{proof}[Proof of Claim~\ref{clm:span}]
Let $S := N_B(F) \subseteq V$ be the set of vertices incident to edges in $F$, and let
$H := (S,F)$ be the subgraph of $G$ induced by $F$ on the vertex set $S$.

We first prove the second inequality. 
Since for any edge $e = (i,j) \in F$, the vector $u_e' = (e_i+e_j)/\sqrt{2}$, all vectors in $P$ lie in the coordinate subspace spanned by $e_i$ for all $i \in S$. Therefore
\[
\dim(\mathrm{span}\,P) \;\le\; |S| \;=\; |N_B(F)|.
\]

We now prove the first inequality. 
Let the connected components of $H$ be $H_1,\dots,H_r$ with vertex sets
$S_1,\dots,S_r$ (a partition of $S$) and edge sets $F_1,\dots,F_r$ (a partition of $F$).
Since the supports of the vectors $\{u'_e : e\in F_c\}$ lie in disjoint coordinate blocks
$S_c$, we have
\[
\dim(\mathrm{span}\,P) \;=\; \sum_{c=1}^r \dim\!\big(\mathrm{span}\,\{u_e : e\in F_c\}\big).
\]
Fix a component $H_c$. Choose a spanning tree $T_c \subseteq F_c$; then $|T_c|=|S_c|-1$.
We claim that the vectors $\{u'_e : e \in T_c\}$ are linearly independent. Indeed, suppose
$\sum_{e\in T_c} \lambda_e\, u'_e = 0$. Pick a leaf $v$ of the tree $T_c$, and let
$e_v=\{v,w\}$ be its unique incident edge in $T_c$. Looking at the $v$-th coordinate of the
preceding relation yields $\lambda_{e_v}=0$, because $u'_{e_v}$ has one in coordinate $v$ while all
other $u'_e$ for $e\in T_c \setminus \{e_v\}$ have zero in that coordinate. Removing $e_v$ and $v$ from the tree and repeating this
argument inductively forces all coefficients to be zero. Thus
\[
\dim\!\big(\mathrm{span}\,\{u_e : e\in F_c\}\big) = |T_c| \;=\; |S_c|-1.
\]
Summing over components gives
\[
\dim(\mathrm{span}\,P) = \sum_{c=1}^r (|S_c|-1) \;=\; |S| - r.
\]
Finally, since $S$ consists of vertices incident to edges of $F$, each component $H_c$
has at least two vertices, so $r \le |S|/2$. Therefore
\[
\dim(\mathrm{span}\,P) \;\ge\; |S| - r \;\ge\; \frac{|S|}{2}
\;=\; \frac{|N_B(F)|}{2}.
\]
Combining with the upper bound proves the claim.
\end{proof}

\section{Algorithm Details}
\label{app:algo-details}

\subsection{Preliminaries}

The Schur complement is a useful way to compute the determinant and eigenvalues of a block matrix. 

\begin{fact}[Schur complement~\cite{Bhatia}]\label{fact:schur}
Consider a $(p+q) \times (p+q)$ block matrix $M$ where the submatrix $X \in \R^{p \times p}$ is invertible. 
\[
M= \begin{bmatrix}
\begin{array}{c|c}
X & Y \\ \hline
W & Z
\end{array}
\end{bmatrix}
\]
Then the Schur complement for the block $X$ is given by $Z - W X^{-1} Y$, and
\begin{align*}
\det(M)= \det(X) \cdot \det(Z - WX^{-1} Y). 
\end{align*}
\end{fact}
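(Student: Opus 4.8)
The plan is to prove the determinant identity via an explicit block factorization of $M$ --- the matrix analogue of completing the square. First I would verify, by direct block multiplication, that
\[
M = \begin{bmatrix} I_p & 0 \\ WX^{-1} & I_q \end{bmatrix} \begin{bmatrix} X & 0 \\ 0 & Z - WX^{-1}Y \end{bmatrix} \begin{bmatrix} I_p & X^{-1}Y \\ 0 & I_q \end{bmatrix},
\]
which is well defined precisely because $X$ is invertible, so $X^{-1}$ exists. Multiplying out the right-hand side and simplifying with $XX^{-1} = I_p$ reproduces the four blocks $X$, $Y$, $W$, $Z$ of $M$; this verification is the only actual computation in the argument.

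Next I would take determinants of both sides and apply multiplicativity, $\det(ABC) = \det(A)\,\det(B)\,\det(C)$. The first and third factors are block triangular with identity diagonal blocks, hence have determinant $1$, while the middle factor is block diagonal, hence has determinant $\det(X)\cdot\det(Z - WX^{-1}Y)$. Combining these gives $\det(M) = \det(X)\det(Z - WX^{-1}Y)$ immediately.

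The one auxiliary fact used is that a block triangular matrix $\begin{bmatrix} A & B \\ 0 & D \end{bmatrix}$ has determinant $\det(A)\det(D)$, and similarly for block lower triangular and block diagonal matrices; this is standard and I would simply cite it --- it follows from the Leibniz permutation formula by noting that a permutation contributes a nonzero term only when it maps the index block of $A$ into itself. There is no real obstacle here: the entire content lies in writing down the correct factorization, which is classical. An equivalent, equally short route is to left-multiply $M$ by the unit-determinant matrix $\begin{bmatrix} I_p & 0 \\ -WX^{-1} & I_q \end{bmatrix}$, observe that the product is block upper triangular with diagonal blocks $X$ and $Z - WX^{-1}Y$, and read off the determinant from there.
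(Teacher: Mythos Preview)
Your argument is correct: the block LDU factorization you write down is precisely the classical proof of the Schur complement determinant identity, and every step (the verification by block multiplication, the multiplicativity of the determinant, and the determinant of block-triangular matrices) is valid. Note, however, that the paper does not actually prove this statement --- it is recorded as a \emph{fact} with a citation to Bhatia and is used without proof --- so there is no ``paper's own proof'' to compare against. Your write-up is exactly the standard textbook derivation one would find in the cited reference, so nothing more is needed here.
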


\subsection{Bounding ball}
\label{sec:bounding-ball}

The first step of our algorithm is to restrict our attention to a bounding ball that has radius proportional to the longest axis of the optimal ellipsoid.  This can be done via a simple search procedure.  

\begin{lemma}[Bounding ball]\label{lem:bounding-ball}
    For the set of points \(A = \{a_1, \dots, a_n\} \subseteq \mathbb{R}^d\), create a set of balls 
    \[\mathfrak{B} = \{B(a_i, \|a_i - a_j\|) : i, j \in [n] \}.\]
    Remove any \(B \in \mathfrak{B}\) such that \(|B \cap A| \le (1 - \alpha) n\).  We have that \(|\mathfrak{B}| \le n^2\).  
    
    Let \(E^\star\) be the minimum volume ellipsoid of condition number at most \(\beta\) that contains at least \((1 - \alpha)\)-fraction of the points.  Then, there exists a ball \(\widehat{B} \in \mathfrak{B}\) such that \(\widehat{B}\) contains all of the points in \(E^\star\), and the radius of \(\widehat{B}\) is at most twice the longest axis length of \(E^\star\). 
\end{lemma}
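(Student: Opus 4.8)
The plan is to show that some ball in the collection $\mathfrak{B}$ both contains $E^\star$ and has radius at most twice the longest semi-axis of $E^\star$. The natural candidate is the following: let $c$ be the center of $E^\star$ and let $r$ be its longest semi-axis length. Since $E^\star$ contains at least $(1-\alpha)n$ points, at least one input point $a_i$ lies inside $E^\star$, and hence $\|a_i - c\| \le r$. I would pick such an $a_i$ as the center of the candidate ball. For the radius, I want to pick $a_j$ to be a point of $A$ inside $E^\star$ that is as far from $a_i$ as possible; the key geometric fact is that the diameter of $E^\star$ is exactly $2r$, so $\|a_i - a_j\| \le 2r$ for every $a_j \in E^\star$, and in particular $\widehat{B} := B(a_i, \|a_i-a_j\|)$ has radius at most $2r$, which is the required bound on the radius.

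The remaining point is to argue that $\widehat{B}$ actually contains all of $E^\star$ — not just the input points inside it. Here is where I would use the choice of $a_j$ more carefully together with the condition-number bound. Since $a_i \in E^\star$, the ball $B(a_i, 2r)$ of radius equal to the full diameter of $E^\star$ certainly contains all of $E^\star$ (any point of $E^\star$ is within distance $2r$ of any other point of $E^\star$, in particular of $a_i$). So if I take $a_j$ to be the point of $A \cap E^\star$ maximizing $\|a_i - a_j\|$, the ball $\widehat{B} = B(a_i, \|a_i-a_j\|)$ has radius at most $2r$; but I need the radius to be large enough to swallow $E^\star$. This requires a lower bound on $\max_{a_j \in A\cap E^\star}\|a_i - a_j\|$, and that is exactly where the condition-number assumption enters: if the points inside $E^\star$ were all clustered near $a_i$, one could not conclude anything. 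Concretely, I expect to argue that $B(a_i, 2r) \supseteq E^\star$ always, and then note that since $E^\star$ contains $(1-\alpha)n$ points and has condition number at most $\beta$, the ball $B(a_i, \|a_i - a_j\|)$ with the furthest such $a_j$ still contains all these points and — by checking against the definition of $\mathfrak{B}$ after the pruning step — survives the pruning because $|\widehat{B} \cap A| \ge |E^\star \cap A| \ge (1-\alpha)n$.

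So the clean structure is: (i) $|\mathfrak{B}| \le n^2$ is immediate since $\mathfrak{B}$ is indexed by ordered pairs $(i,j)$; (ii) fix the optimal $E^\star$, pick $a_i \in A \cap E^\star$ (exists since $(1-\alpha)n \ge 1$), pick $a_j \in A \cap E^\star$ farthest from $a_i$; (iii) bound $\|a_i - a_j\| \le \mathrm{diam}(E^\star) = 2r$, so the candidate ball has radius $\le 2r$; (iv) show $E^\star \subseteq B(a_i, 2r)$ — and here I'd need $\|a_i - a_j\|$ to be close to $2r$, which is where I'd invoke the condition number $\beta$ to guarantee the points inside $E^\star$ are spread out enough that the farthest one is within a constant factor of the diameter. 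Wait — actually rereading the statement, the radius bound is "at most twice the longest axis length," and the containment is a separate requirement; the cleanest fix is to replace the candidate radius by $\|a_i-a_j\|$ but then observe we may need to inflate. Let me reconsider: the sharp claim that both hold simultaneously with radius $\le 2r$ needs $\mathrm{diam}(E^\star) \le 2r$ exactly (true) and $E^\star \subseteq B(a_i, r')$ for $r' = \|a_i-a_j\|$; since $a_i \in E^\star$, we have $E^\star \subseteq B(a_i, \mathrm{diam}(E^\star)) = B(a_i, 2r)$, so if $\|a_i - a_j\| = 2r$ we are done, and otherwise we must argue $a_j$ can be chosen with $\|a_i-a_j\|$ within a factor of the longest axis — this is the one place the condition-number hypothesis is genuinely used, and it is the main obstacle: one must show that among the $(1-\alpha)n$ points in $E^\star$, the spread (measured from $a_i$) is comparable to $r$, using that the ellipsoid is not too skinny, so that rounding the candidate radius up to $2r$ still keeps us inside $\mathfrak{B}$ and inside the claimed radius bound.
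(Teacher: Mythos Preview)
You have misread the statement. The phrase ``contains all of the points in $E^\star$'' refers to the \emph{input points} $A\cap E^\star$, not to the entire ellipsoid $E^\star$ as a set in $\mathbb{R}^d$. With this reading the proof is immediate and the paper's argument is exactly the one you sketch in steps (i)--(iii): take $a_i,a_j\in A\cap E^\star$ to be the pair maximizing $\|a_i-a_j\|$ (or, equivalently, fix any $a_i\in A\cap E^\star$ and then take $a_j$ farthest from it); then every $a_\ell\in A\cap E^\star$ satisfies $\|a_\ell-a_i\|\le\|a_j-a_i\|$, so $A\cap E^\star\subseteq B(a_i,\|a_i-a_j\|)=\widehat B$, and since $a_i,a_j\in E^\star$ the triangle inequality gives $\|a_i-a_j\|\le 2R$ where $R$ is the longest semi-axis. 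The condition number $\beta$ plays no role in the proof; it appears in the statement only as part of the specification of $E^\star$.

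Your attempt to show the stronger containment $E^\star\subseteq\widehat B$ runs into a genuine obstruction, and your proposed fix via $\beta$ does not work: the condition number constrains the aspect ratio of the ellipsoid, not the spatial distribution of the points inside it. Even for a perfect ball ($\beta=1$), all of $A\cap E^\star$ could in principle be clustered near the center, making $\|a_i-a_j\|$ arbitrarily small relative to $R$. (One can argue from minimality of $E^\star$ that the points must touch the boundary in enough directions, but this is a different line of reasoning and is neither needed nor pursued in the paper.) Once you drop the stronger goal, the difficulty you identify in step (iv) disappears entirely.
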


\begin{proof}
The bound \(|\mathfrak{B}|\le n^2\) is immediate from the construction since there are at most $n^2$ pairs of $i,j \in [n]$.

Let \(c\) be the center of the ellipsoid \(E^\star\), and let $R = \max_{x \in E^*} \|x-c\|$ be the length of its longest semi-axis. 
Let $a_i, a_j \in A\cap E^\star$ be the pair of centers that maximize the distance among all pairs of points in \(A\cap E^\star\), i.e.,
\[
\|a_i-a_j\|=\max\{\|a-a'\|: a,a'\in A\cap E^\star\}.
\]
Set the ball \(\widehat{B}:=B(a_i,\|a_i-a_j\|)\).

By construction, for every \(a_\ell\in A\cap E^\star\) we have 
\(\|a_\ell-a_i\|\le \|a_j-a_i\|\), so \(A\cap E^\star\subseteq \widehat{B}\); 
that is, \(\widehat{B}\) contains all points of \(A\) that lie in \(E^\star\). Thus, we have the ball $\widehat{B} \in \mathfrak{B}$.
Moreover, since \(a_i,a_j\in E^\star\),
\[
\|a_i-a_j\|\le \|a_i-c\|+\|a_j-c\|\le R+R=2R,
\]
so the radius of \(\widehat{B}\) is at most \(2R\), 
i.e., at most twice the longest semi-axis length of \(E^\star\). 
This proves the claim.
\end{proof}

\subsection{Iterative outlier removal via dual}
\label{sec:outlier-removal}


We analyze the number of points that are removed in our iterative outlier removal process.  We use Chebyshev's inequality with the observation that our removal process forms a martingale. 

\begin{proposition}[Point removal martingale]
    Let the random variable $w_{(nj) + i}$ be the weight that is assigned by the SDP to point $i$ in iteration $j \in [R]$ ($j$th time the SDP is solved) in Algorithm \ref{fig:ellipsoid_tcolor}, where we define $w_{(nj) + i} = 0$ for points that have already been deleted by iteration $j$.

    Let $X_{(nj) + i} \sim \mathrm{Bern}(w_{(nj) + i})$ be the indicator random variable of whether point $i$ was removed in iteration $j$.  
    Define $Z_{(nj) + i}$ as the deviation of $X_{(nj) + i}$ from its expectation,
    \[Z_{(nj) + i} = X_{(nj) + i} - w_{(nj) + i}.\]

    We have that 
    \[\E \left[Z_k | Z_{0}, \dots, Z_{k - 1}  \right] = 0, \]
    \[\E \left[Z_k^2\right] = \E[w_k (1 - w_k)] \le \E[w_k].\]

    Thus, for any subset of the variables $S \subseteq [nR]$, we have 
    \begin{align*}
        \Var \left[ \sum_{k \in S}Z_k\right] &= \E \left[\left( \sum_{k \in S}Z_k \right)^2 \right] 
        = \sum_{k \in S} \sum_{k' \in S} \E\left[ Z_k Z_{k'} \right] \\
        &= \sum_{k \in S} \E \left[ Z_k^2 \right] \le \E \left[\sum_{k \in S} w_k \right] = \E \left[ \sum_{k \in S} X_k \right].
    \end{align*}

    \label{prop:point-removal-martingale}
\end{proposition}

This allows us to conclude the two statements we need about the outlier removal process.  First we argue that over $R$ rounds, we will make progress removing the outliers.

\begin{lemma}[Outlier removal progress]\label{lem:outlier-removal-progress}
    Assume \(\alpha n\) is at least a sufficiently large constant. There exists a constant \(c\) such that if we run the iterative outlier removal procedure for \(R \ge c \frac{\alpha n}{\gamma d}\) iterations, then with probability at least \(0.99\), there exists an iteration on which less than \(\gamma d\) mass was placed on outliers.  
\end{lemma}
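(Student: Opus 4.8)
The plan is to argue by contradiction with a simple ``budget'' argument: there are only $\alpha n$ outliers, and every iteration on which at least $\gamma d$ mass is placed on (still-active) outliers forces the procedure to make $\Omega(\gamma d)$ units of progress toward exhausting that budget. Fix the set $O$ of points not covered by $E^\star$, so $|O|\le\alpha n$, and for iteration $t$ let $w^{(t)}$ be the dual (leverage-score type) weight vector the procedure computes on the current active set $A^{(t)}$, and $\mu_t:=\sum_{i\in O\cap A^{(t)}} w^{(t)}_i$ the mass it places on outliers. Because the total weight is normalized to be $O(d)$, an iteration with $\mu_t\ge\gamma d$ is one in which the active outliers carry a constant fraction of the weight; the removal/down-weighting step then peels off $\Omega(\gamma d)$ worth of outlier mass (either by sampling points proportionally to $w^{(t)}$ and deleting them, or by multiplicatively reweighting outliers and deleting any whose accumulated weight crosses a constant threshold). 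Either way the total outlier mass ever touched telescopes, $\sum_{t=1}^{R}\mu_t=O(\alpha n)$, since each of the $\le\alpha n$ outliers can absorb only $O(1)$ mass before it is deleted. If $\mu_t\ge\gamma d$ held for \emph{every} $t\in[R]$ we would get $\sum_{t=1}^R\mu_t\ge\gamma d\cdot R=c\,\alpha n$, contradicting the bound $O(\alpha n)$ once $c$ exceeds the hidden constant.

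To turn this into the probability-$0.99$ statement I would reason not about the deterministic sum $\sum_t\mu_t$ but about the random progress it produces. Let $Y_t$ be the amount of outlier mass deleted in iteration $t$; conditioned on the history $\mathcal{F}_{t-1}$ and on iteration $t$ being ``bad'' (outlier mass $\ge\gamma d$, so in particular active outliers remain), we have $\mathbb{E}[Y_t\mid\mathcal{F}_{t-1}]\ge\delta\gamma d$ for an absolute constant $\delta>0$, and $Y_t$ is bounded by the per-iteration deletion size, which is $O(d)$. Assuming toward a contradiction that all $R$ iterations are bad, a Freedman/Azuma-type inequality gives
\[
\Pr\!\left[\,\sum_{t=1}^{R} Y_t \;\ge\; \tfrac{1}{2}\,\delta\gamma d\,R\,\right]\;\ge\;0.99 ,
\]
and $\tfrac{1}{2}\delta\gamma d\,R=\tfrac{1}{2}\delta c\,\alpha n$ exceeds $\alpha n$ for $c$ a large enough absolute constant; but $\sum_t Y_t\le|O|\le\alpha n$ always. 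Hence with probability at least $0.99$ not all iterations are bad, i.e., some iteration places less than $\gamma d$ mass on outliers.

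The step I expect to be the main obstacle is pinning down the per-iteration inequality $\mathbb{E}[Y_t\mid\mathcal{F}_{t-1}]\ge\delta\gamma d$ from the actual update rule. It rests on two things: the weights must be genuinely normalized with total mass $O(d)$ --- true for (approximate) John-position / leverage-score weights --- and, since these weights are only estimated in practice (e.g., via a sketch), one must show the ``a constant fraction of the weight sits on outliers'' property survives the approximation up to constants, and then union-bound this over all $R$ iterations, which pushes the per-iteration failure probability to $\ll 1/R$ and mildly erodes $\delta$. A secondary bookkeeping point is that the procedure may also delete inliers; this only helps the budget argument, but one should confirm it does not empty the active set of outliers on an iteration we are counting --- which is automatic, because the bound $\mathbb{E}[Y_t\mid\mathcal{F}_{t-1}]\ge\delta\gamma d$ is only invoked on bad iterations, and a bad iteration has active outliers by definition.
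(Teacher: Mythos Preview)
Your proposal is correct and follows essentially the same budget-plus-concentration argument as the paper: at most $\alpha n$ outliers can ever be removed, each ``bad'' iteration removes $\Omega(\gamma d)$ of them in expectation, and a tail bound (the paper uses Chernoff on the normalized variables $X_i=Y_i/(\alpha n)$; you use Freedman/Azuma directly on $Y_t$) shows that $R=c\,\alpha n/(\gamma d)$ bad iterations would overshoot the budget with probability $\ge 0.99$. Your extra discussion of approximate weights and inlier deletions is additional caution the paper does not spell out, but the core argument is the same.
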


\begin{proof}
    Consider the random variables as defined in \Cref{prop:point-removal-martingale}.  Let $S \subseteq [nR]$ be the subset of indices that correspond to outliers, 
    \[S = \{ (nj) + i : a_i \notin E^\star \},\]
    where $E^\star$ is the optimal solution.

    Let 
    \[Z_S = \sum_{k \in S} Z_k\]
    be the deviation of the number of outliers removed from its expectation.  
        
    By \Cref{prop:point-removal-martingale} we have 
    \[\Var (Z_S) \le \E\left[ \sum_{k \in S} X_k \right] \le \alpha n,\]
    since there are at most $\alpha n$ outliers that can be removed in any run of the algorithm.
    By Chebyshev's inequality we have 
    \[\ProbOp \left[ |Z_S| \ge (c - 1) \alpha n \right] \le \frac{\alpha n}{(c - 1)^2(\alpha n)^2} = \frac{1}{(c - 1)^2 \alpha n}.\]

    Let \(\mathcal{E}\) be the event that on all \(R\) iterations, the SDP placed at least $\gamma d$ mass on outliers.  We have that in the event \(\mathcal{E}\),
    \begin{align*}
        \sum_{k \in S} Z_k  &= \sum_{k \in S} X_k - \sum_{k \in S} w_k \\
        &\le \alpha n - R\cdot\gamma d\\
        &\le -\left( c - 1 \right) \alpha n.
    \end{align*}
    Thus $\mathcal{E}$ is a subset of the event that \(|Z_S| \ge (c - 1) \alpha n\), and the probability of event $\mathcal{E}$ is bounded by $\frac{1}{(c - 1)^2\alpha n} $.  Taking $c = 2$ and $\alpha n \ge 100$ proves the claim.  
\end{proof}

We also must show that over $R$ rounds we do not remove too many points total (outliers and inliers together).  

\begin{lemma}[Bound on number of points removed]\label{lem:dont-remove-too-many-points}
    Assume \(\frac{\alpha n}{\gamma}\) is at least a sufficiently large constant.  Let \(c \in \mathbb{R}\) be a constant.  Then there exists a constant \(c_2\) such that, if we run the iterative outlier removal procedure for \(R = c \frac{\alpha n}{\gamma d}\) iterations, then with probability at least \(0.99\), we remove a total of at most 
    \[c_2 \frac{\alpha n}{\gamma} \text{ points}.\]
\end{lemma}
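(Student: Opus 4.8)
The plan is to control the \emph{total} number of points (outliers and inliers alike) deleted in a single iteration in expectation, sum this bound over the $R$ iterations, and then pass to a high-probability statement via a tail inequality. To set up, I would let $Y_i$ be the number of points removed in iteration $i$ (with $Y_i := 0$ if the procedure has already terminated before iteration $i$), and let $\mathcal{F}_{i-1}$ denote the $\sigma$-algebra generated by the randomness of iterations $1,\dots,i-1$; the quantity to bound is $\sum_{i=1}^R Y_i$.

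The key input is a per-iteration estimate of the form $\E[Y_i \mid \mathcal{F}_{i-1}] \le C d$ for an absolute constant $C$. The reason is that the randomized removal step in each iteration deletes points according to the (fractional) dual certificate computed that round, and this certificate has total mass $O(d)$ — the very same object whose mass \emph{restricted to outliers} was shown to be at most $\gamma d$ in the proof of Lemma~\ref{lem:outlier-removal-progress}. Since each point is deleted with probability proportional to its weight under a certificate of total mass $O(d)$, the conditional expected number of deletions in one round is at most a constant times $d$. Using linearity of expectation and $R = c\,\tfrac{\alpha n}{\gamma d}$, this gives $\E\big[\sum_{i=1}^R Y_i\big] = \sum_{i=1}^R \E[Y_i] \le C d R = C c\,\tfrac{\alpha n}{\gamma}$.

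It then remains to upgrade this expectation bound to a $0.99$-probability guarantee. The cleanest route is Markov's inequality on the nonnegative variable $\sum_i Y_i$: with probability at least $0.99$ we have $\sum_i Y_i \le 100\, C c\,\tfrac{\alpha n}{\gamma}$, so the lemma holds with $c_2 := 100 C c$, and crucially no independence across rounds is needed. If one prefers the constant $c_2$ to be closer to $Cc$, one can instead observe that $Y_i$ is bounded (at most the support size of an $O(d)$-mass certificate is removed in a round) and apply a Bernstein- or Freedman-type martingale concentration inequality to $\sum_i\big(Y_i - \E[Y_i \mid \mathcal{F}_{i-1}]\big)$; here the hypothesis $\tfrac{\alpha n}{\gamma d}\ge 1$ ensures $R\ge c$, so the sum has enough terms for concentration to be meaningful. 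I expect the main obstacle to be the first step: arguing rigorously, and \emph{conditionally on the history}, that one iteration's randomized rounding removes only $O(d)$ points in expectation, and checking that the possibility of early termination only decreases the total count. Once that per-iteration estimate is in place, the remainder is a routine application of a tail bound.
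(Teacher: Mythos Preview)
Your proposal is correct and follows the paper's argument almost verbatim: the paper observes that each iteration removes exactly $d$ points in expectation (the dual weights sum to $d$), sums over the $R=c\,\tfrac{\alpha n}{\gamma d}$ iterations by linearity to get expected total $c\,\tfrac{\alpha n}{\gamma}$, and applies Markov's inequality to obtain $c_2=100c$. Your extra scaffolding with filtrations and the alternative martingale route is not needed, but it is not wrong either.
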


\begin{proof}
    Let 
    \[Z = \sum_{k \in [nR]} Z_k\]
    be the deviation of the number of points removed from its expectation.  By \Cref{prop:point-removal-martingale} we have 
    \[\Var (Z) \le \E \left[\sum_{k \in [nR]} X_k \right] \le Rd,\]
    since the SDP assigns weight $d$ in each round.  By Chebyshev's inequality, we have 
    \[\ProbOp \left[ |Z| \ge c_3 Rd \right] \le \frac{Rd}{c_3^2 R^2 d^2} = \frac{1}{c_3^2 Rd}.\]
    Thus, 
    \[\ProbOp \left[ \sum_{k \in [nR]} X_k \ge c_2 \frac{\alpha n}{\gamma} \right] \leq \ProbOp \left[ \sum_{k \in [nR]} Z_k \ge \left(\frac{c_2}{c} - 1\right)  Rd \right] \le \frac{1}{(c_2/c - 1)^2 Rd}.\]
    Taking $c_2$ such that \(c_2/c - 1 \ge 1\), and assuming $Rd = \frac{\alpha n}{\gamma} \ge 100$ proves the claim.  
\end{proof}

\subsection{Dual equivalence}

\begin{proposition}[Dual equivalence under the lift]
\label{prop:dual-equiv-existence}
Let $a_1,\ldots,a_n\in\mathbb{R}^d$ and $\tilde a_i:=(a_i,1)\in\mathbb{R}^{d+1}$.
Let $w$ be any optimal dual solution for the free-center MVEE (\Cref{prop:dual-MVE-center}) in $\mathbb{R}^d$. 
Then $\wtw= \frac{d+1}{d} w$ is an optimal dual solution for the origin-centered MVEE (\Cref{prop:dual-MVE}) of the lifted points $(\tilde a_i)$ in $\mathbb{R}^{d+1}$.

Conversely, let $\wtw$ be any optimal dual solution for the origin-centered MVEE (\Cref{prop:dual-MVE}) of the lifted points in $\mathbb{R}^{d+1}$. 
Then $w= \frac{d}{d+1} \wtw$ is an optimal dual solution for the free-center MVEE (\Cref{prop:dual-MVE-center}) in $\mathbb{R}^d$.
\end{proposition}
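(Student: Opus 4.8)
The plan is to write out explicitly the two dual programs referenced as \Cref{prop:dual-MVE-center} and \Cref{prop:dual-MVE}, and to check that the lift $a_i\mapsto\tilde a_i=(a_i,1)$ puts them in exact correspondence up to the scalar factor $\tfrac{d+1}{d}$. Recall the standard form of these duals: the free-center MVEE dual in $\mathbb{R}^d$ maximizes $\log\det\!\big(\sum_i w_i a_i a_i^\top - (\sum_i w_i a_i)(\sum_i w_i a_i)^\top\big)$ over $w\ge 0$, $\sum_i w_i = 1$ (equivalently, after normalization, one fixes $\sum_i w_i = d$ depending on the convention used in the paper); the origin-centered MVEE dual in $\mathbb{R}^{d+1}$ maximizes $\log\det\!\big(\sum_i \wtw_i\,\tilde a_i \tilde a_i^\top\big)$ over $\wtw\ge 0$, $\sum_i \wtw_i = d+1$. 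First I would record the block structure of the lifted second-moment matrix:
\[
\sum_i \wtw_i\,\tilde a_i\tilde a_i^\top =
\begin{bmatrix}
\begin{array}{c|c}
\sum_i \wtw_i a_i a_i^\top & \sum_i \wtw_i a_i\\ \hline
\sum_i \wtw_i a_i^\top & \sum_i \wtw_i
\end{array}
\end{bmatrix}.
\]

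Next I would apply \Cref{fact:schur} (Schur complement) with $X$ the bottom-right $1\times1$ block $\sum_i\wtw_i$, which is a positive scalar whenever $\wtw\ne 0$. This gives
\[
\det\Big(\sum_i \wtw_i\,\tilde a_i\tilde a_i^\top\Big)
= \Big(\sum_i \wtw_i\Big)\cdot
\det\Big(\sum_i \wtw_i a_i a_i^\top - \tfrac{1}{\sum_i \wtw_i}\big(\textstyle\sum_i \wtw_i a_i\big)\big(\textstyle\sum_i \wtw_i a_i\big)^\top\Big).
\]
Now substitute $\wtw = \tfrac{d+1}{d}w$ where $\sum_i w_i$ equals the normalization constant of the free-center dual. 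The scalar $\sum_i\wtw_i$ becomes a fixed constant (independent of the choice of feasible point), and the matrix inside the second determinant is exactly a positive scalar multiple of the centered covariance $\sum_i w_i a_i a_i^\top - (\sum_i w_i a_i)(\sum_i w_i a_i)^\top$ appearing in the free-center dual, because dividing by $\sum_i\wtw_i$ cancels the extra mass against the squared first moment. Hence, up to an additive constant in the objective (the $\log$ of a fixed scalar and a fixed power coming from pulling the scalar out of the $(d+1)\times(d+1)$ determinant — really out of the $d\times d$ Schur block), the two objectives agree, and the feasible regions are related by the same bijective scaling $\wtw\leftrightarrow\tfrac{d+1}{d}w$. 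Therefore $w$ is optimal for one program if and only if $\wtw=\tfrac{d+1}{d}w$ is optimal for the other, which gives both directions of the claim simultaneously.

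The one subtlety I would be careful about is bookkeeping the normalization conventions: the factor $\tfrac{d+1}{d}$ is exactly what is needed so that $\sum_i\wtw_i = d+1$ when $\sum_i w_i = d$ (the convention under which the centered covariance matches the optimal free-center ellipsoid shape), so I must confirm which normalization \Cref{prop:dual-MVE-center} and \Cref{prop:dual-MVE} use and track the resulting constants carefully; a sign or a power-of-$(\sum_i w_i)$ error here would break the correspondence. I also need $\sum_i w_i>0$ at optimum (true, since the zero vector is infeasible), so the Schur complement is legitimately applicable. Aside from this constant-chasing, the argument is essentially the single determinant identity above, so I expect the normalization bookkeeping — not any genuine mathematical difficulty — to be the main obstacle.
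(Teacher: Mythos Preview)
Your approach is correct but genuinely different from the paper's. The paper does not compare the two dual \emph{objectives} directly; instead it verifies KKT conditions in each direction. Starting from an optimal $w$ in $\mathbb{R}^d$, it builds the lifted moment matrix $\widehat A=\sum_i w_i\,\tilde a_i\tilde a_i^\top$, uses the Schur complement to compute $\widehat A^{-1}$ (not $\det\widehat A$), and obtains the quadratic-form identity $\tilde a_i^\top \widehat A^{-1}\tilde a_i=(a_i-c)^\top M^{-1}(a_i-c)+\tfrac1d$; from this it reads off primal feasibility and complementary slackness for the pair $(\widehat Q,\wtw)=(\tfrac{d}{d+1}\widehat A^{-1},\tfrac{d+1}{d}w)$, and then runs the same computation backwards. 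Your route---Schur for the \emph{determinant}, showing that the lifted $\log\det$ objective equals the centered one plus a constant on bijectively related feasible sets---gives both directions at once and avoids constructing the primal ellipsoids explicitly. What the paper's argument buys in exchange is an explicit primal correspondence (the formula for $\widehat Q$ in terms of $Q,c$), and it never needs to name the dual objective functional, only the KKT system. Your only real vulnerability is exactly what you flagged: under the paper's normalization $\sum_i w_i=d$, the centered objective is $\log\det\!\big(\sum_i w_i a_ia_i^\top-\tfrac1d(\sum_i w_i a_i)(\sum_i w_i a_i)^\top\big)$, not the expression with an undivided outer product you wrote first, so the constant-chasing must be done with that factor of $1/d$ in place.
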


\begin{proof}
We use KKT conditions for both problems.

\medskip\noindent
\textbf{KKT (free-center in $\mathbb{R}^d$).}
For any optimal primal-dual pair $(Q,c,w)$ in $\mathbb{R}^d$,
\begin{equation}\label{eq:KKT-free}
Q^{-1} = M=\sum_i w_i(a_i-c)(a_i-c)^\top,\qquad
c=\frac{\sum_i w_i a_i}{\sum_i w_i},\qquad
\sum_i w_i=d,
\end{equation}
and complementary slackness: $w_i>0\Rightarrow (a_i-c)^\top Q(a_i-c)=1$.

\medskip\noindent
\textbf{KKT (origin-centered on the lift in $\mathbb{R}^{d+1}$).}
For any optimal primal-dual pair $(\widehat Q,\widetilde w)$ on the lifted points $\wta_i$,
\begin{equation}\label{eq:KKT-lift}
\widehat Q^{-1} = \widehat M =\sum_i \widetilde w_i \wta_i\wta_i^\top,\qquad
\sum_i \widetilde w_i=d+1,
\end{equation}
and complementary slackness: $\widetilde w_i>0\Rightarrow \wta_i^\top \widehat Q \wta_i=1$.

\medskip\noindent
\textbf{From $d$ to $d+1$.}
Let $w$ be optimal in $\mathbb{R}^d$ and set $c$ and $Q, M$ as in \eqref{eq:KKT-free}. Define the lifted matrix
$$
\widehat A:=\sum_i w_i \wta_i \wta_i^\top
=\sum_{i=1}^n w_i\,\wta_i \wta_i^\top = \begin{pmatrix} \sum_i w_i a_i a_i^\top & \sum_i w_i a_i\\ \sum_i w_i a_i^\top & \sum_i w_i \end{pmatrix}.
$$
The Schur complement using \eqref{eq:KKT-free} gives
\begin{equation}\label{eq:Mhat-inv}
\widehat A^{-1}=
\begin{pmatrix}
M^{-1} & - M^{-1} c\\
- c^\top M^{-1} & c^\top M^{-1} c + \frac{1}{d}
\end{pmatrix}
\quad\Longrightarrow\quad
\wta_i^\top \widehat A^{-1}\wta_i
=(a_i-c)^\top M^{-1}(a_i-c)+\frac{1}{d}.
\end{equation}
Then, we set 
$$
\widehat Q:=\frac{d}{d+1}\,\widehat A^{-1},\qquad
\widetilde w:=\frac{d+1}{d}\,w.
$$
Thus, we have $\sum_i \wtw_i=(d+1)/d\sum_i w_i=d+1$ and
$$
\widehat Q^{-1}=\frac{d+1}{d}\,\widehat A
=\sum_i \frac{d+1}{d}w_i\wta_i\wta_i^\top
=\sum_i \wtw_i \wta_i\wta_i^\top.
$$
By \eqref{eq:Mhat-inv} and feasibility of $(Q,c,w)$, we have for any point $a_i$ with $(a_i-c)^\top Q (a_i -c) \leq 1$,
$$
\wta_i^\top \widehat Q \,\wta_i
=\frac{d}{d+1}\left((a_i-c)^\top M^{-1} (a_i-c)+\frac{1}{d}\right)\leq 1,
$$
with equality whenever $w_i>0$ (since then $(a_i-c)^\top Q(a_i-c)=1$). Thus, primal feasibility and complementary slackness hold for $(\widehat Q,\wtw)$, which implies optimality.

\medskip\noindent
\textbf{From $d+1$ to $d$.}
Let $(\widehat Q,\wtw)$ be optimal in $\mathbb{R}^{d+1}$ and set
$$
w:=\frac{d}{d+1}\,\wtw,\qquad
c:=\frac{\sum_i \wtw_i a_i}{\sum_i \wtw_i}=\frac{\sum_i w_i a_i}{\sum_i w_i}.
$$
Define the matrix in $\R^{d\times d}$
$$
M:=\sum_i w_i(a_i-c)(a_i-c)^\top
\quad\text{and}\quad
Q:=M^{-1}.
$$
Let $\widehat A:=\sum_i w_i \wta_i\wta_i^\top$; then
$$
\widehat Q^{-1}=\sum_i \widetilde w_i \wta_i\wta_i^\top
=\frac{d+1}{d}\sum_i w_i\wta_i\wta_i^\top
=\frac{d+1}{d} \widehat A
\quad\Longrightarrow\quad
\widehat Q=\frac{d}{d+1}\,\widehat A^{-1}.
$$
Writing $\widehat A^{-1}$ in the block form \eqref{eq:Mhat-inv} yields
$$
\wta_i^\top \widehat Q \, \wta_i
=\frac{d}{d+1}\left((a_i-c)^\top M^{-1}(a_i-c)+\frac{1}{d} \right).
$$
Thus, since $\wta_i^\top \widehat Q \, \wta_i \leq 1$, we have $(a_i-c)^\top M^{-1}(a_i-c) \leq 1$.
So, $(Q,c)$ is primal feasible in $\mathbb{R}^d$. If $\widetilde w_i>0$, then
$\wta_i^\top \widehat Q \,\wta_i=1$ by the complementary slackness in $\R^{d+1}$, hence
$(a_i-c)^\top M^{-1}(a_i-c)=1$, which implies complementary slackness for $(Q,c,w)$ as well. Thus, $(Q,c,w)$ is an optimal solution in $\R^d$.
\end{proof}

\end{document}